\def\ps@headings{%
\def\@oddhead{\mbox{}\scriptsize\rightmark \hfil \thepage}%
\def\@evenhead{\scriptsize\thepage \hfil \leftmark\mbox{}}%
\def\@oddfoot{}%
\def\@evenfoot{}}
\makeatother \pagestyle{headings}
\begin{document}

\title{\huge{MSCET: A Multi-Scenario Offloading Schedule for Biomedical Data Processing and Analysis in Cloud-Edge-Terminal Collaborative Vehicular Networks}}

\author{Zhichen~Ni,~Honglong~Chen,~{\em Senior~Member,~IEEE},~Zhe~Li,~Xiaomeng~Wang,\\~Na~Yan,~Weifeng~Liu,~{\em Senior~Member,~IEEE},~and~Feng~Xia,~{\em Senior~Member,~IEEE}
\IEEEcompsocitemizethanks{\IEEEcompsocthanksitem Zhichen Ni, Honglong Chen, Zhe Li, Xiaomeng Wang, Na Yan and Weifeng Liu are with College of Control Science and Engineering,
China University of Petroleum (East China), Qingdao 266580, P. R. China.\protect
\IEEEcompsocthanksitem Feng Xia is with the School of Science, Engineering and Information Technology, Federation University Australia, Ballarat, VIC 3353, Australia.}
\thanks{$\#$ Corresponding author: Honglong Chen. Email: chenhl@upc.edu.cn.}}

\IEEEtitleabstractindextext{%
\begin{abstract}
  With the rapid development of Artificial Intelligence (AI) and Internet of Things (IoTs), an increasing number of computation intensive or delay sensitive biomedical data processing and analysis tasks are produced in vehicles, bringing more and more challenges to the biometric monitoring of drivers. Edge computing is a new paradigm to solve these challenges by offloading tasks from the resource-limited vehicles to Edge Servers (ESs) in Road Side Units (RSUs). However, most of the traditional offloading schedules for vehicular networks concentrate on the edge, while some tasks may be too complex for ESs to process. To this end, we consider a collaborative vehicular network in which the cloud, edge and terminal can cooperate with each other to accomplish the tasks. The vehicles can offload the computation intensive tasks to the cloud to save the resource of edge. We further construct the virtual resource pool which can integrate the resource of multiple ESs since some regions may be covered by multiple RSUs. In this paper, we propose a {M}ulti-{S}cenario offloading schedule for biomedical data processing and analysis in {C}loud-{E}dge-{T}erminal collaborative vehicular networks called MSCET. The parameters of the proposed MSCET are optimized to maximize the system utility. We also conduct extensive simulations to evaluate the proposed MSCET and the results illustrate that MSCET outperforms other existing schedules.
\end{abstract}

\begin{IEEEkeywords}
  Biomedical Data Processing and Analysis,
  Cloud-Edge-Terminal Collaborative Vehicular Networks,
  Optimization,
  Resource Allocation,
  Task Offloading.
\end{IEEEkeywords}}

\maketitle

\IEEEdisplaynontitleabstractindextext
\IEEEpeerreviewmaketitle

\ifCLASSOPTIONcompsoc
\IEEEraisesectionheading{\section{Introduction}\label{sec:introduction}}
\else
\section{Introduction}
\label{sec:introduction}
\fi

\IEEEPARstart{T}{he} rapid development of Artificial Intelligence (AI)~\cite{Chen:2021,Li:2021} and Internet of Things (IoTs)~\cite{Aixin:2020,Chen:20182,Lin:2021} has paved a way to realize modern applications in vehicles. For example, on-board AI plays an important role in the analysis of driver biomedical data~\cite{Feng:2018,Qiu:2019}. However, these applications are mostly computation intensive or delay sensitive while the computation resource of vehicles is relatively limited. This conflict may decline the Quality of Service (QoS) and eventually restrict the development of vehicular networks~\cite{Mao:2019,Song:2017}. Cloud computing can be used in such kind of scenario~\cite{Seabolt:2020,Huang:2021}, in which vehicles can offload tasks to resource-rich cloud servers via wireless network. However, the latency for the cloud computing may be too long, resulting in that some tasks cannot be accomplished in time.

Edge computing is a new paradigm and has attracted lots of attentions from both the research and industrial community. In contrast to cloud computing, edge computing brings computation services from the remote cloud to the network edge, enabling that some tasks can be quickly processed near the vehicles~\cite{Khan:2019}. Recently, edge computing for vehicular networks has been studied, and some research works~\cite{Zhang:2019,Lingjun:2019} have been proposed.

Most of the traditional vehicular networks concentrate on Edge Servers (ESs) in Road Side Units (RSUs), while some tasks may be too complex for ESs to process. A potential solution is to combine cloud computing with edge computing, where the vehicles can offload the complex and delay-tolerant tasks to Cloud Servers (CSs)~\cite{Kai:2021}. Moreover, with the development of Network Function Virtualization (NFV) technology, the construction of virtual resource pool becomes feasible~\cite{Li:2019}. Some regions may be covered by multiple RSUs in the vehicular networks~\cite{Wu:2021}. In such situation, the virtual resource pool can be utilized in the offloading schedule to integrate the resource of multiple ESs. And we further adopt Software Defined Network (SDN) to manage the virtual resource pool to improve its performance.

In this paper, we concentrate on investigating the problem of efficient and reliable multi-scenario offloading schedule for biomedical data processing and analysis in Cloud-Edge-Terminal collaborative vehicular networks. To this end, there are two challenges that need to be well addressed. The first one is the selection of vehicles' offloading targets, which can affect the optimization of the QoS utility. For instance, it will cause the overload of RSU if a numerous number of vehicles offload tasks to the same RSU, which may result in the timeout of tasks~\cite{Feng:2015}. Thus, we adopt a matching method for the selection of each vehicle's offloading target and set a dynamic maximum connection limit for each RSU. The second challenge is how to obtain the reasonable offloading ratio and resource allocation to improve the system utility. To address these challenges, we consider a Cloud-Edge-Terminal collaborative vehicular network consisting of a CS, a set of RSUs, each of which is equipped with an ES, and a set of vehicles. Each vehicle has a computation task and can select servers to offload the task to. By jointly optimizing the offloading targets selection, offloading ratio and resource allocation, we obtain the near-optimal schedule to maximize the system utility.

{\bf The main contributions of this paper} are the following:

\begin{itemize}

\item We model the offloading decision and resource allocation for Cloud-Edge-Terminal collaborative vehicular networks, and consider both the profits and QoS in the system utility;

\item To improve the processing efficiency of tasks, we construct the virtual resource pool to integrate the resource of multiple ESs and further introduce the SDN for the centralized management of the pool.

\item We propose a Multi-Scenario offloading schedule for biomedical data processing and analysis tasks in Cloud-Edge-Terminal collaborative vehicular networks (MSCET), which optimizes the resource allocation, offloading targets and offloading ratio to maximize the system utility;

\item We conduct extensive simulations to evaluate the performance of the proposed MSCET and the results illustrate that MSCET outperforms other existing schedules.

\end{itemize}

The rest of this paper is organized as follows. Section~\ref{sec2} reviews the related work. Section~\ref{sec3} presents the system model and problem formulation. We propose the MSCET schedule with theoretical analysis in Sections~\ref{sec4}. In Section~\ref{sec5}, we conduct the performance evaluation. Finally, this paper is concluded in Section~\ref{sec6}.

\section{Related Work}\label{sec2}

Edge computing has been well studied in recent years and many research works have been proposed. Typically, some of them focused on how to make the reasonable offloading selections of servers~\cite{Guo2:2018,Zhao:2017}, while some others showed interest in optimizing the offloading ratio and resource allocation to maximize the system utility~\cite{Yu:2020,You:2017}. Guo~\textit{et al.}~\cite{Guo2:2018} proposed a heuristic greedy offloading scheme for the mobile edge computation offloading problem in ultra-dense network and the schedule showed superior performance by conducting computation offloading over multiple mobile edge computing servers. For optimal service provisioning, Yu~\textit{et al.}~\cite{Yu:2020} formulated an optimization problem to minimize the weighted sum of the service delay of all IoT devices and energy consumption by jointly optimizing the UAV position, communication and computing resource allocation, and task splitting decisions. Then they developed an efficient algorithm based on the successive convex approximation to obtain suboptimal solutions.

In order to address the conflict between the computation intensive or delay sensitive applications and resources-constrained vehicles, there are some research works on the combination of edge computing and vehicular networks in recent years~\cite{Dai:2019,Sorkhoh:2019,Liu:2018}. Dai~\textit{et al.}~\cite{Dai:2019} studied the resource allocation problem for a multi-user multi-server vehicular edge computing system and proposed a low-complexity algorithm to jointly optimize the server selection, offloading ratio and computation resource. The algorithm showed the superior performance in reducing the task processing delay while avoiding server overload. Liu~\textit{et al.}~\cite{Liu:2018} considered the multiple vehicles' computation offloading problem in vehicular edge networks and formulated it as a multi-user computation offloading game problem. Then a distributed algorithm based on game theory was proposed to reduce the latency of the computation offloading of vehicles.

Some research works~\cite{Zhang:2020,Liu2:2017} have studied on the integration of the SDN and vehicular edge computing networks. An SDN-enabled network architecture that integrated different types of access technologies was constructed in~\cite{Liu2:2017} to provide low-latency and high-reliability communication in mobile edge computing networks. Extensive numerical results showed that their proposed architecture has a significant improvement in the quality of user experience. Aiming to minimize the processing delay of the computation task of vehicles, Zhang~\textit{et al.}~\cite{Zhang:2020} proposed an SDN based load-balancing task offloading schedule in fiber-wireless (FiWi) enhanced vehicular edge computing networks, where SDN was introduced to provide supports for the centralized network and vehicle information management.

The delay of task processing can be decreased greatly with the assistance of edge computing, while some tasks may be too complex for edge servers to process. A potential solution is to combine the cloud computing with edge computing. Some works have focused on the cooperation between cloud and edge in recent years~\cite{Shen:2020,Zhao:2019}. Zhao~\textit{et al.}~\cite{Zhao:2019} presented a collaborative approach based on mobile edge computing and cloud computing in vehicular networks and further proposed a collaborative computation offloading and resource allocation optimization schedule to improve the system utility. To tackle the Problem of edge server overload, Shen~\textit{et al.}~\cite{Shen:2020} proposed a dynamic task offloading method with minority game in cloud-edge computing. However, to simplify the model, they assumed that the tasks are offloaded to either the cloud server or the edge servers, which degraded the performance of cloud-edge cooperation networks.

The above works can improve the ability of the system for completing the complex tasks by utilizing the cooperation of cloud computing and edge computing. However, some of them only consider a relatively simple case such as binary decision. Moreover, they mainly focus on a single scenario and do not concentrate on the influence of the complex communication environment in vehicular networks. In this paper, we propose a multi-scenario offloading schedule for Cloud-Edge-Terminal cooperation vehicular networks such that each vehicle can adaptively select its schedule based on their environments.

\section{System Model and Problem Formulation}\label{sec3}

\subsection{Network Framework}

As shown in Fig.~\ref{system}, we consider a Cloud-Edge-Terminal collaborative vehicular network consisting of a Cloud Server (CS), a set of Road Side Units (RSUs) $\mathcal M =\lbrace E_1,E_2,\cdots,E_M \rbrace$, and a set of vehicles $\mathcal N =\lbrace V_1,V_2,\cdots,V_N \rbrace$ that represent the terminal users. RSUs are randomly deployed in the area and each RSU is equipped with a resource limited Edge Server (ES). The coverage size of each RSU may be different due to the complex communication environment. Let $r = \lbrace r_1,r_2,\cdots,r_M \rbrace$ denote the coverage radius of each RSU. Some {regions} may be covered by multiple RSUs as the RSUs are randomly distributed, and we define them as the overlapping regions, such as region A shown in Fig.~\ref{system}, and other regions as the general regions. We use $l_i = \lbrace 0,1\rbrace$ to represent the region that vehicle $V_i$ is located in, where $l_i = 1$ represents that vehicle $V_i$ is located in overlapping region, and $l_i = 0$ represents that vehicle $V_i$ is located in general region.

In general region, we simplify the model to consider a unidirectional road, where $M_G$ RSUs are deployed along the road, as shown in Fig.~\ref{system2}. While in overlapping region, we construct the virtual resource pool to integrate computation resource of the multiple distributed ESs using Network Function Virtualization (NFV) technology to improve the processing efficiency of tasks, and Software Defined Network (SDN) is adopted to provide centralized management for the resource pool. There are three parts in SDN architecture, i.e., control plane, data plane and application plane, the main components and responsibilities of which are as follows:

\begin{itemize}

\item Control Plane: It is made up of an SDN controller and RSU controllers, where SDN controller provides global management based on the information of RSU controllers, and RSU controllers are responsible for information collection on the resource of ESs and the tasks of vehicles. Moreover, they can also forward the offloading schedule of SDN controller.

\item Data Plane: Data plane is a physical hardware layer that focuses on forwarding the data from RSUs and vehicles. It can be managed by either SDN controller or RSU controllers.

\item Application Plane: It contains a series of hypervisors. In our constructed virtual resource pool, the main hypervisors consist of resource migration, OpenFlow management and offloading decision.

\end{itemize}

\begin{figure}
  \centering
  \includegraphics[width=3in]{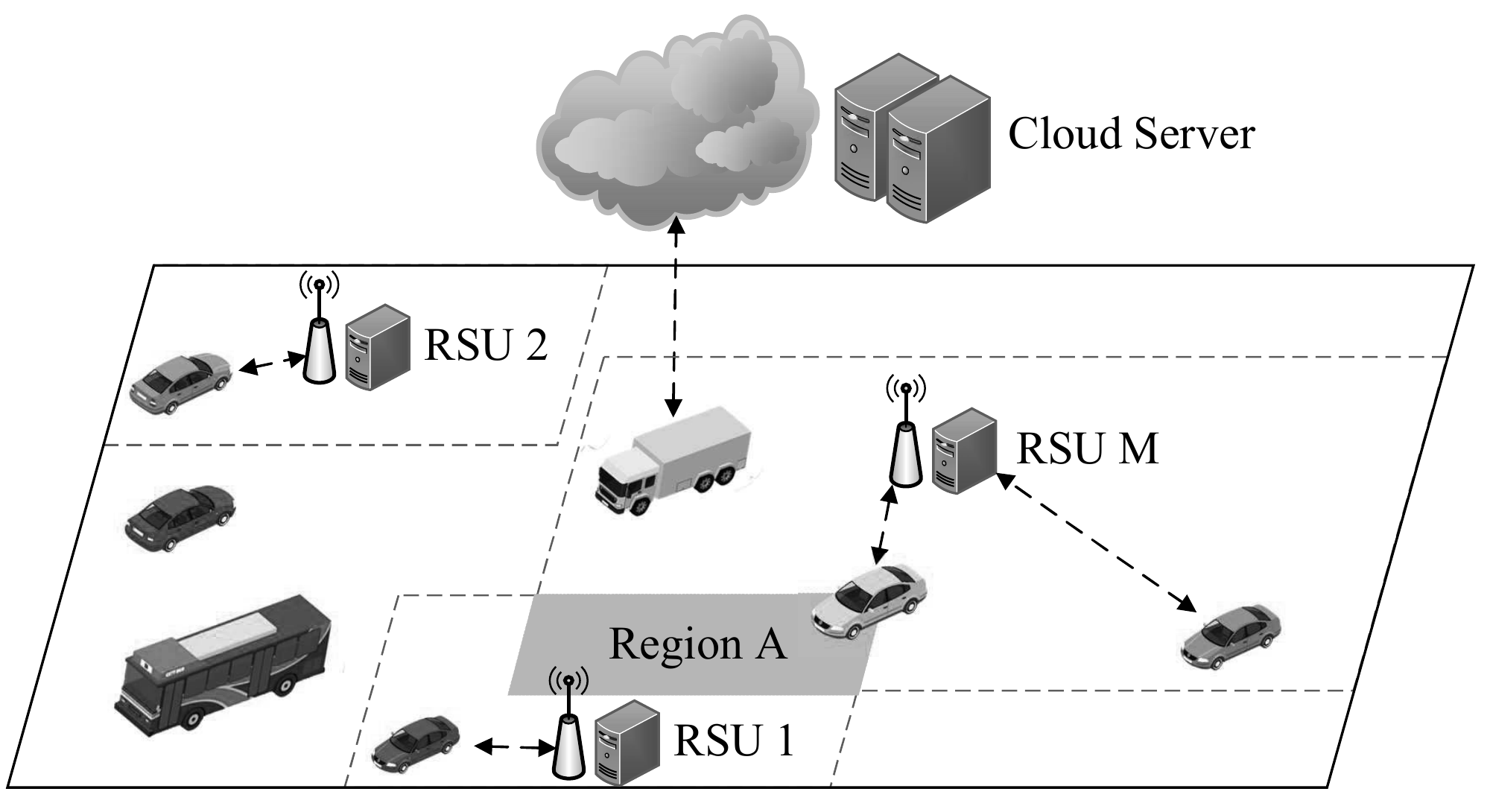}
  \caption{Task offloading in Cloud-Edge-Terminal collaborative vehicular networks.}\label{system}~\\
  \includegraphics[width=3in]{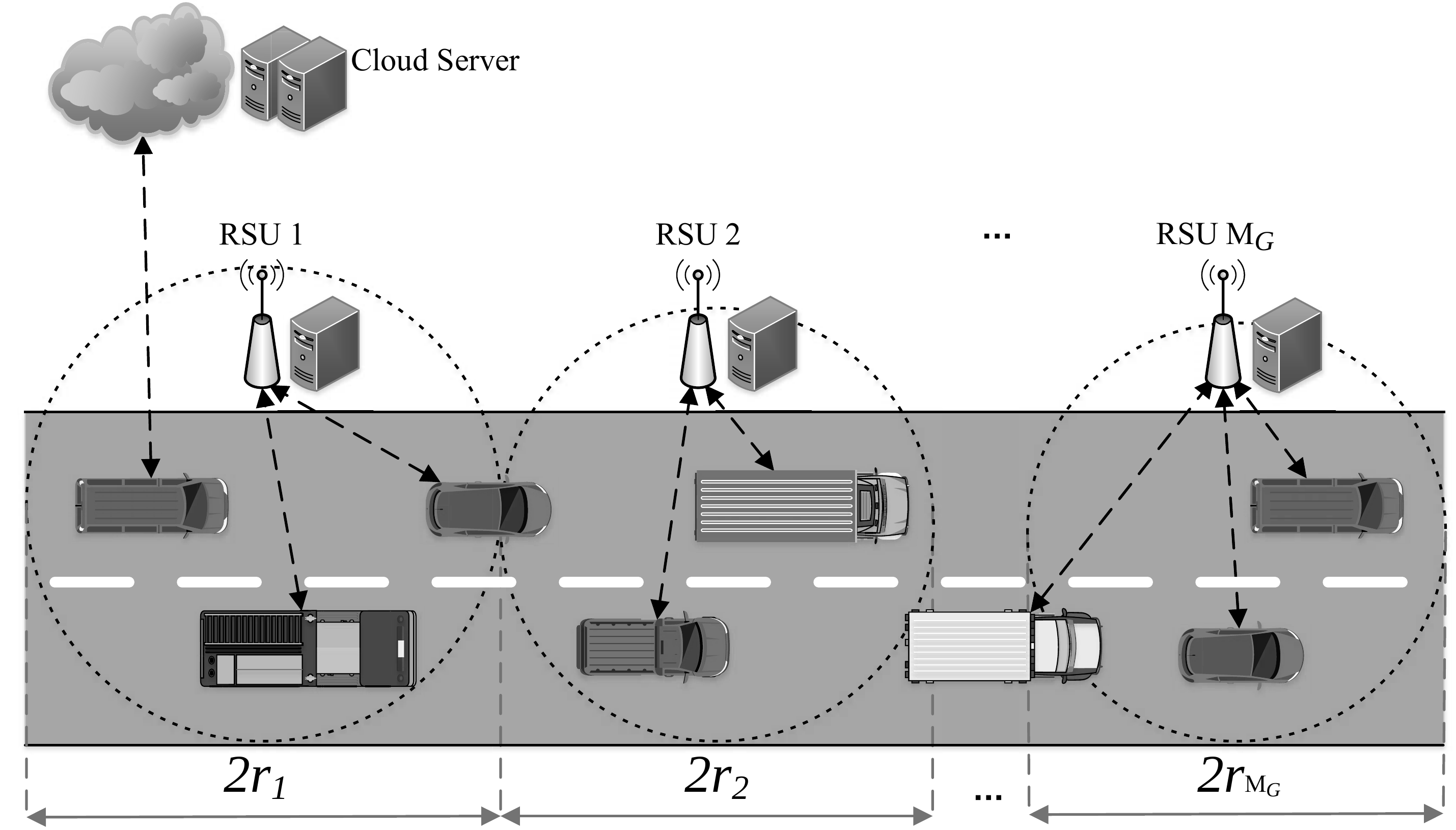}
  \caption{Task offloading in general region.}\label{system2}
\end{figure}

\subsection{Task Model}

Each vehicle $V_i$ is running on the road at a speed of $v_i$ and the safe distance between two vehicles is $L$. Vehicle $V_i$ has a computation intensive or delay sensitive task that contains three characteristics $\lbrace D_i,R_i,T_i^{max} \rbrace$, where $D_i$ {(in bits)} represents the size of task input data, $R_i$ {(in CPU cycles/bit) is the number of CPU cycles required to process 1-bit of the task}, and $T_i^{max}$ denotes the maximum tolerant delay to accomplish the task. Each task can be divided into three parts, which can be processed in the terminal, ES and CS in parallel. Let $\alpha_i = \lbrace \alpha_{i,e},\alpha_{i,c} \rbrace$  $(0 \leq \alpha_{i,e} + \alpha_{i,c} \leq 1)$ denote the offloading ratio of vehicle $V_i$, then $V_i$ offloads $\alpha_{i,c}D_i$ part of its task to the CS, offloads $\alpha_{i,e}D_i$ part of its task to the selected ES and processes the rest of its task locally. We denote $S_{i,j} = \lbrace 0,1\rbrace$ as the offloading decision of vehicle $V_i$ on RSU, where $S_{i,j} = 1$ if vehicle $V_i$ selects RSU $E_j$ as its offloading target, and $S_{i,j} = 0$ otherwise.

\subsection{Communication Model}

 There are two communication scenarios in our network, i.e., cloud communication and edge communication. For cloud communication, we assume that the data transmission rate is constant because of the long-distance between the vehicles and cloud~\cite{Sorkhoh:2019}. Let $\rho_{i,CS}$ denote the data transmission rate between vehicle $V_i$ and CS. The communication delay for cloud computing between vehicle $V_i$ and CS, denoted as $T_{i,CS}^{comm}$, can be formulated as:
\begin{equation}\label{transmission time}
\ T_{i,CS}^{comm} =  \frac{{\alpha_{i,c}}D_i}{\rho_{i,CS}} \, .
\end{equation}

For edge computing in general region, each vehicle can establish a communication link with one of its nearby RSUs using vehicle to infrastructure (V2I) technology. Assume that the wireless channel is stable within a time slot (e.g., a few seconds), while it may change in different slots because of the mobility of vehicles~\cite{Zeng:2017}. According to Shannon Theory, the data transmission rate between vehicle $V_i$ and RSU $E_j$, denoted as $\rho_{i,j}$, can be expressed as:
\begin{equation}\label{data transmission rate}
\ \rho_{i,j} = B\log \left (1 + \frac{P_iG_{i,j}}{\sigma + \sum_{i' = 1 , i' \neq i}^N P_{i'}G_{{i'},j} } \right) ,
\end{equation}
where $\sigma$ is white noise of the channel. $B$, $P_i$ and $G_{i,j}$ are the bandwidth, the transmission power of vehicle $V_i$ and the channel gain between vehicle $V_i$ and RSU $E_j$ respectively. Based on above analysis, the communication delay between vehicle $V_i$ and RSU $E_j$ for edge computing in general region, denoted as $T_{i,j}^{comm}$, is:
\begin{equation}\label{transmission time in GR}
\ T_{i,j}^{comm} = \frac {{\alpha_{i,e}}D_i}{\rho_{i,j}}.
\end{equation}

For edge computing in overlapping region, we denote $\rho_{i,P}$ as the data transmission rate between vehicle $V_i$ and the virtual resource pool, and $T_{i,P}^{comm}$ as the communication delay between vehicle $V_i$ and the virtual resource pool. The expressions of $\rho_{i,P}$ and $T_{i,P}^{comm}$ are similar to that of $\rho_{i,j}$ and $T_{i,j}^{comm}$, so we do not repeat them again.

By following the works in~\cite{Huang:2012,Xu:2016}, we ignore the download communication delay as the size of processed data is generally much smaller than that of input data when there is no server overload.

\vspace{-3mm}
\subsection{Computation Model}

As vehicle $V_i$ processes $(1 - \alpha_{i,e} - \alpha_{i,c})D_i$ part of its task locally and offloads the rest to the selected servers (i.e., ES and CS), we consider three different cases of the computation delay for ease of explanation.

\subsubsection{Local Computing}
Vehicle $V_i$ processes $(1 - \alpha_{i,e} - \alpha_{i,c})D_i$ part of its task locally using its own resource. Different vehicles may have different resource, and the amount of resource can be obtained by offline measurement method~\cite{Miettinen:2010}. Let $f_{i,V}$ (in CPU cycles/s) denote the computation resource of vehicle $V_i$, thus the local computation delay of $V_i$, denoted as $T_i^{local}$, can be expressed as:
\begin{equation}\label{Local computing}
\ T_i^{local} = \frac {{\Big( 1 - \alpha_{i,e} - \alpha_{i,c} \Big)}{D_iR_i}}{f_{i,V}}  \, .
\end{equation}

\subsubsection{Edge Computing}
{For edge computing in general region, vehicle $V_i$ offloads $\alpha_{i,e}D_i$ part of its task to RSU $E_j$.} As each ES is resource limited, it is necessary to allocate its computation resource. We denote $f_{i,j}$ (in CPU cycles/s) as the allocated resource of RSU $E_j$'s ES to vehicle $V_i$. The computation delay for edge computing in general region when vehicle $V_i$ offloads task to RSU $E_j$, denoted as $T_{i,j}^{comp} $, can be formulated as:
\begin{align}
\ &T_{i,j}^{comp} = \frac {{\alpha_{i,e}}D_iR_i}{f_{i,j}} \, , \label{Edge computing in GR} \\
&s.t. \quad 0 \leq \sum_{i = 1}^N S_{i,j} f_{i,j} \leq F_j \quad \forall j \in [1,\cdots,M_G] \tag{\ref{Edge computing in GR}{{a}}}\label{CR},
\end{align}
where $F_j$ is the total resource of ES in RSU $E_j$.

For edge computing in overlapping region, vehicles do not need to select the offloading target as the ESs of all the RSUs are integrated as the virtual resource pool. We denote $f_{i,P}$ (in CPU cycles/s) as the allocated resource from resource pool to vehicle $V_i$. The computation delay for edge computing in overlapping region of vehicle $V_i$, denoted as $T_{i,P}^{comp}$, is:
\begin{align}
\ & T_{i,P}^{comp} = \frac {{\alpha_{i,e}}D_iR_i}{f_{i,P}} \, , \label{Edge computing in OR} \\
&s.t. \quad 0 \leq \sum_{i = 1}^N f_{i,P} \leq F_{P} \, \tag{\ref{Edge computing in OR}{{a}}}\label{CP},
\end{align}
where $F_{P}$ is the total resource of the virtual resource pool.

\subsubsection{Cloud Computing}

As the CS is resource-rich, it does not need to consider the resource allocation. We denote $f_{i,CS}$ (in CPU cycles/s) as the resource CS provides for vehicle $V_i$, then the cloud computation delay of vehicle $V_i$, denoted as $T_{i,CS}^{comp}$, is:
\begin{equation}\label{Cloud computing}
\ T_{i,CS}^{comp} = \frac {{\alpha_{i,c}}D_iR_i}{f_{i,CS}} \, .
\end{equation}

\subsection{Processing Delay}

As mentioned above, each task is divided into three parts, which will be processed in the terminal, ES and CS in parallel. Thus, the task processing delay is dominated by the largest one of the local computing delay, ES processing delay and CS processing delay. Note that vehicle $V_i$ will establish a communication link with CS only when it finishes the communication with RSU, which means that the processing delay of CS consists of three parts, i.e., the computation delay, the communication delay between vehicle and RSU and the communication delay between vehicle and CS. Based on above analysis, the processing delay of vehicle $V_i$, denoted as $T_i$, can be expressed as:
\begin{equation}\label{Processing delay}
  \ T_i = \begin{cases}
  \max \left \lbrace T_i^{local}, T_{i,P}^{ES}, T_{i,P}^{CS}  \right \rbrace ,  \quad\quad\quad\quad\ \  \text{if } l_i = 1;  \vspace{1ex} \\
  \sum_{j=1}^M S_{i,j} \max \left \lbrace T_i^{local}, T_{i,j}^{ES}, T_{i,j}^{CS} \right\rbrace , \quad \text{otherwise},
\end{cases}
\end{equation}
where $T_{i,P}^{ES} = T_{i,P}^{comm} + T_{i,P}^{comp}$,\vspace{1ex} $T_{i,P}^{CS} = T_{i,P}^{comm} + T_{i,CS}^{comm} + T_{i,CS}^{comp}$, $T_{i,j}^{ES} = T_{i,j}^{move} + T_{i,j}^{comm} + T_{i,j}^{comp}$ and\vspace{1ex} $T_{i,j}^{CS} = T_{i,j}^{move} + T_{i,j}^{comm} + T_{i,CS}^{comm} + T_{i,CS}^{comp}$. $T_{i,j}^{move}$\vspace{1ex} is the time vehicle $V_i$ moves to the coverage of RSU $E_j$, which can be formulated as:
\begin{equation}\label{Travel Time}
  \ \displaystyle{T_{i,j}^{move} = \left [\frac {x_j - x_i - r_j}{v_i}\right]^+} \, ,
\end{equation}
where $x_j$, $x_i$ represent the coordinates of RSU $E_j$ and vehicle $V_i$ respectively, and $[\cdot]^+$ denotes $\max \lbrace 0,\cdot \rbrace$.

\subsection{System Utility}

The objective of this paper is to maximize the system utility of the platform. The platform rents servers to provide computation resource and utilizes the resource to process the tasks from vehicles. Apart from profits, QoS is also one of the important indicators for the system utility. Therefore, we design the system utility that considers both the profits and QoS.

For the Cloud-Edge-Terminal collaborative vehicular networks, the main factor that affects the QoS is the processing delay. Generally, QoS should monotonically decrease with the increase of the processing delay. According to~\cite{Dai:2019}, we choose the logarithmic function as the QoS model. Thus, the QoS utility for vehicle $V_i$, denoted as $U_i^{QoS}$, is:
\begin{equation}\label{QoS}
\ U_i^{QoS} = \beta_{Q} \log\Big(1 + \varepsilon - T_i\Big) \, ,
\end{equation}
where $\beta_{Q}$ is the weight of QoS, and $\varepsilon$ is a known value that guarantees $U_i^{QoS} > 0$. Thus, the system utility for vehicle $V_i$, denoted as $ U_i$, is formulated as:
\begin{equation}\label{system utility function}
\begin{split}
\ U_i &= \beta_P \Big(c_v {(\alpha_{i,e} + \alpha_{i,c})}D_i R_i - c_s{(f_i + f_{i,CS})}\Big) + U_i^{QoS} \\
&= \beta_P \Big(c_v {(\alpha_{i,e} + \alpha_{i,c})}D_i R_i - c_s{(f_i + f_{i,CS})}\Big) \\
&\quad + \beta_{Q} \log\Big(1 + \varepsilon - T_i\Big) \, ,
\end{split}
\end{equation}
where $\beta_P$ is the weight of profits, $c_v$ and $c_s$ represent the unit price that system charges from vehicles for the task processing and the unit price that the system rents the servers respectively, $f_i$ (in CPU cycles/s) denotes the allocated computation resource to vehicle $V_i$ from ES or virtual resource pool, i.e.,

\begin{equation}
\ f_i = \begin{cases}
f_{i,P} ,  \ \    \quad\quad\quad\ \text{if } l_i = 1; \vspace{1ex} \\
f_{i,j}, \ \ \ \ \ \ \ \ \ \ \ \ \text{otherwise}.
\end{cases}
\end{equation}
\subsection{Problem Formulation}

\begin{table}[]
\caption{List of notations}
\label{List of notations}
\begin{spacing}{0.63}
\begin{tabular}{lm{6.7cm}}
\hline
\textbf{Symbols}             & \textbf{Definitions}                                                    \\ \hline
\multicolumn{2}{l}{\textbf{Sets:}}                                                                     \\
$\mathcal M$                 & Set of RSUs.                                                            \\
$\mathcal N$                 & Set of vehicles.                                                        \\
$r$                          & Set of RSUs' radius.                                                    \\
\multicolumn{2}{l}{\textbf{Decision Variables:}}                                                       \\
$\alpha_i$                   & Offloading ratio of vehicle $V_i$.                                      \\
$S_{i,j}$                    & Offloading decision of vehicle $V_i$ on RSU $E_j$.                      \\
{$f_i$}                      & Allocated computation resource to vehicle $V_i$                         \\
                             & from ES or virtual resource pool.                                       \\
\multicolumn{2}{l}{\textbf{Functions:}}                                                                \\
$T_{i,j}^{move}$             & The time vehicle $V_i$ moves to the coverage of                         \\
                             & RSU $E_j$.                                                              \\
$T_i^{local}$                & Local computing delay of vehicle $V_i$.                                 \\
$T_{i,CS}^{comm}$            & Communication delay for cloud computing                                 \\
                             & between vehicle $V_i$ and CS.                                           \\
$T_{i,j}^{comm}$             & Communication delay for edge computing                                  \\
                             & between vehicle $V_i$ and RSU $E_j$.                                    \\
$T_{i,P}^{comm}$             & Communication delay for edge computing be-                              \\
                             & tween vehicle $V_i$ and the virtual resource pool.                      \\
$T_{i,CS}^{comp}$            & Cloud computation delay of vehicle $V_i$.                               \\
$T_{i,j}^{comp}$             & Computation delay for edge computing in                                 \\
                             & general region when vehicle $V_i$ offloads task                         \\
                             & to RSU $E_j$.                                                           \\
$T_{i,P}^{comp}$             & Computation delay for edge computing in                                 \\
                             & overlapping region of vehicle $V_i$.                                    \\
$T_i$                        & Processing delay of vehicle $V_i$.                                      \\
$U_i^{QoS}$                  & QoS utility for vehicle $V_i$.                                          \\
$U_i$                        & System utility for vehicle $V_i$.                                       \\
\multicolumn{2}{l}{\textbf{Parameters:}}                                                               \\
$l_i$                        & The region that vehicle $V_i$ is located in.                            \\
$D_i$                        & The size of task input data of vehicle $V_i$.                           \\
$R_i$                        & The number of CPU cycles required to process                            \\
                             & 1-bit of the task of vehicle $V_i$.                                     \\
$T_i^{max}$                  & The maximum tolerant delay to accomplish                                \\
                             & the task of vehicle $V_i$.                                              \\
$\rho_{i,CS}$                & Data transmission rate between vehicle $V_i$                            \\
                             & and CS.                                                                 \\
$\rho_{i,j}$                 & Data transmission rate between vehicle $V_i$                            \\
                             & and RSU $E_j$.                                                          \\
$\rho_{i,P}$                 & Data transmission rate between vehicle $V_i$                            \\
                             & and the virtual resource pool.                                          \\
$f_{i,V}$                    & Local computation resource of vehicle $V_i$.                            \\
{$f_{i,CS}$}                 & Allocated computation resource to vehicle $V_i$                         \\
                             & from CS.                                                                \\
$F_j$                        & Total computation resource of ES in RSU $E_j$.                          \\
$F_{P}$                      & Total computation resource of resource pool.                            \\
$\beta_P$                    & Weight of profits.                                                      \\
$\beta_Q$                    & Weight of QoS.                                                          \\
$c_v$                        & Unit price that system charges from vehicles                            \\
                             & for the task processing.                                                \\
$c_s$                        & Unit price that system rents the servers.                               \\
$\varepsilon$                & A known value that guarantees $U_i^{QoS} > 0$.                          \\ \hline
\end{tabular}
\end{spacing}
\end{table}

We formulate the offloading schedule as an optimization problem and aim to maximize the system utility. Let \bm{$S$} $= \lbrace S_{i,j}\rbrace$ denote the selection decision on RSUs, \bm{$\alpha$} $= \lbrace \bm{\alpha_c}, \bm{\alpha_e} \rbrace$ denote the offloading ratio decision of the vehicles, \bm{$f$} $= \lbrace f_i\rbrace$ denote the resource allocation decision for the vehicles. Then, we formulate the offloading schedule problem as follows:
\begin{align}
P1: \underset{\bm{S},\bm{\alpha},\bm{f}}{max} &\ \sum_{i=1}^N U_i \, , \label{Problem 1} \\
 s.t.\quad  &T_i \leq T_i^{max}\quad \forall i \in [1,\cdots,N]; \tag{\ref{Problem 1}{{a}}}\label{Problem 1.1}\\
 & 0 \leq \alpha_{i,e} + \alpha_{i,c} \leq 1\quad \forall i \in [1,\cdots,N]; \tag{\ref{Problem 1}{{b}}}\label{Problem 1.2}\\
 & 0 \leq \alpha_{i,e} \leq 1\quad \forall i \in [1,\cdots,N]; \tag{\ref{Problem 1}{{c}}}\label{Problem 1.3}\\
 & 0 \leq \alpha_{i,c} \leq 1\quad \forall i \in [1,\cdots,N]; \tag{\ref{Problem 1}{{d}}}\label{Problem 1.4}\\
 & S_{i,j} = \lbrace 0,1\rbrace\quad \forall i \in [1,\cdots,N] , \ j \in [1,\cdots,M_G]; \tag{\ref{Problem 1}{{e}}}\label{Problem 1.5}\\
 & \sum_{j=1}^M S_{i,j} = 1\quad \forall i \in [1,\cdots,N]; \tag{\ref{Problem 1}{{f}}}\label{Problem 1.6}\\
 & 0 \leq \sum_{i=1}^N S_{i,j} f_{i,j} \leq F_j \quad \forall j \in [1,\cdots,M_G]; \tag{\ref{Problem 1}{{g}}}\label{Problem 1.7}\\
 & 0 \leq \sum_{i=1}^N f_{i,P} \leq F_{P}, \tag{\ref{Problem 1}{{h}}}\label{Problem 1.8}
\end{align}
Constraint (\ref{Problem 1.1}) ensures that the task of vehicle $V_i$ can be accomplished within the maximum tolerant delay $T^{max}_i$. Then Constraints (\ref{Problem 1.2}), (\ref{Problem 1.3}) and (\ref{Problem 1.4}) guarantee that vehicle $V_i$ can only offload a ratio of its task. Constraints (\ref{Problem 1.5}) and (\ref{Problem 1.6}) show that vehicle $V_i$ can select only one RSU as its offloading target. Constraints (\ref{Problem 1.7}) and (\ref{Problem 1.8}) ensure that the allocated computation resource to the vehicles is no more than the total resource. Moreover, Constraint (\ref{Problem 1.5}) show that the offloading decisions (i.e., \bm{$S$}) are binary, while the objective function of Problem P1 is non-linear with respect to \bm{$S$}. Thus, Problem P1 is a mixed integer non-linear programming problem, which is NP-hard~\cite{Garey:1990}.

For ease of reference, we list the frequently used notations in TABLE~\ref{List of notations}.

\section{The Proposed MSCET Schedule}\label{sec4}

In this section, we introduce the proposed multi-scenario offloading schedule for biomedical data processing and analysis in the Cloud-Edge-Terminal collaborative vehicular networks (MSCET). As mentioned above, there are two different cases for the proposed MSCET schedule (i.e., overlapping region and general region), we will discuss each of them in details.

\subsection{Decision in Overlapping Region}
\vspace{-1mm}
In overlapping region, the virtual resource pool is constructed to integrate computation resource, and all vehicles in the region offload tasks to the resource pool. Thus, {\bf$S$} requires no consideration. However, the remaining decision vectors (i.e., {\bf$\alpha_{c}$}, {\bf$\alpha_{e}$} and {\bf$f$}) of the optimization problem in Eq.~(\ref{Problem 1}) are highly coupled with each other, which makes it more difficult to solve the problem. Decoupling the vectors is one of the feasible solutions. Specially, we can fix two of them as constants and then calculate the third one. The two vectors we fixed can be updated based on the result of calculation. The process can be repeated until convergence.

\subsubsection{Optimization of Offloading Ratio for CS}

As {\bf$\alpha_{c}$} is relatively weakly correlated with other vectors, its value will be firstly determined. Due to the assumption that {\bf$\alpha_{e}$} and {\bf$f$} are constants, the delay of edge computing is also a constant. Thus, $T_i$ can be reformulated as:
\begin{equation}\label{Processing delay for CS}
\ T_i = \max \left \lbrace T_i^{local}, T_{i,P}^{comm} + T_{i,CS}^{comm} + T_{i,CS}^{comp}  \right \rbrace .
\end{equation}

Since $T_i$ is affected by but non-differentiable with respect to {\bf$\alpha_{c}$}, we first approximate $T_i$ as:
\begin{equation}\label{Approximate processing delay for CS}
\begin{split}
\ T_i &\leq T_i^{local} + T_{i,P}^{comm} + T_{i,CS}^{comm} + T_{i,CS}^{comp} \\
& = \frac {\Big(1 - \alpha_{i,e} - \alpha_{i,c}\Big) D_iR_i}{f_{i,V}} + \frac{\alpha_{i,e}D_i}{\rho_{i,P}} + \frac{\alpha_{i,c}D_i}{\rho_{i,CS}} + \frac {\alpha_{i,c}D_iR_i}{f_{i,CS}} \\
& = \alpha_{i,c} \omega_{i,c} +  \delta_{i,c}  \, ,
\end{split}
\end{equation}
where $\displaystyle{\omega_{i,c} = \frac{D_i}{\rho_{i,CS}} + \frac {D_iR_i}{f_{i,CS}} - \frac {D_iR_i}{f_{i,V}}}$, \vspace{1ex} $\displaystyle{\delta_{i,c} = \frac{\alpha_{i,e}D_i}{\rho_{i,P}} + \frac {(1 - \alpha_{i,e})D_iR_i}{f_{i,V}}}$.

Let $T_i$ be the value of its upper bound, which represents the worst case. By substituting Eq.~(\ref{Approximate processing delay for CS}) into Problem P1, it can be reformulated as:
\begin{align}
P2.1: \underset{{\bf\alpha_{c}}}{max} & \sum_{i=1}^N \beta_P  \Big(c_v (\alpha_{i,c} + \alpha_{i,e}) D_i R_i - c_s (f_{i,P} + f_{i,CS})\Big)  \notag \\
 &\quad\ + \beta_Q \log\Big(1 + \varepsilon  - \alpha_{i,c} \omega_{i,c} -  \delta_{i,c}\Big) \, , \label{Problem 2} \\
 s.t.\quad  & \alpha_{i,c} \omega_{i,c} +  \delta_{i,c} \leq T_i^{max}\quad \forall i \in [1,\cdots,N]; \tag{\ref{Problem 2}{{a}}}\label{Problem 2.1}\\
 & 0 \leq \alpha_{i,c} \leq 1 - \alpha_{i,e} \quad \forall i \in [1,\cdots,N]. \tag{\ref{Problem 2}{{b}}}\label{Problem 2.2}
\end{align}

\newtheorem{lemma}{Lemma}
\begin{lemma}\label{lm1}
Problem P2.1 is a convex optimization problem.
\end{lemma}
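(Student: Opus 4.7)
The plan is to verify the two defining requirements of a convex program separately: that the feasible set is convex, and that the objective (being a maximization) is concave in the decision vector $\bm{\alpha_{c}}$. Since $\bm{\alpha_{e}}$, $\bm{f}$, and every quantity derived from them (in particular $\delta_{i,c}$, $\omega_{i,c}$, $f_{i,P}$, and $f_{i,CS}$) are held fixed in this subproblem, I will treat them throughout as constants and only track dependence on the components $\alpha_{i,c}$.

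First I would handle the feasible set. Constraint (\ref{Problem 2.1}) has the form $\alpha_{i,c}\omega_{i,c} + \delta_{i,c} \leq T_i^{max}$, which is affine in $\alpha_{i,c}$ and therefore defines a half-space; Constraint (\ref{Problem 2.2}) is a box constraint on each $\alpha_{i,c}$. Both are convex, and the overall feasible set is the intersection of these convex sets across all $i$, hence convex.

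Next I would analyze the objective term by term, splitting $U_i$ into its linear profit part and its logarithmic QoS part. The profit contribution $\beta_P(c_v(\alpha_{i,c}+\alpha_{i,e})D_iR_i - c_s(f_{i,P}+f_{i,CS}))$ is affine in $\alpha_{i,c}$ (everything else is constant under the fixing assumption), and so it is simultaneously convex and concave. For the QoS contribution $\beta_Q\log(1+\varepsilon - \alpha_{i,c}\omega_{i,c} - \delta_{i,c})$, the argument of the logarithm is affine in $\alpha_{i,c}$ and, by the definition of $\varepsilon$ in Eq.~(\ref{QoS}), is strictly positive on the feasible set. Since the real logarithm is concave on the positive reals and the composition of a concave function with an affine map is concave, each QoS term is concave in $\alpha_{i,c}$. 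Summing finitely many concave (and affine) functions preserves concavity, so $\sum_{i=1}^N U_i$ is concave in $\bm{\alpha_{c}}$, and maximizing a concave objective over a convex feasible set is, by definition, a convex program.

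The main subtlety I anticipate is simply confirming that the logarithm's argument stays positive throughout the feasible region, because concavity of $\log$ requires a positive domain; this is why I would explicitly appeal to the role of $\varepsilon$ introduced in Eq.~(\ref{QoS}) and to the upper-bound substitution in Eq.~(\ref{Approximate processing delay for CS}), which together ensure $1+\varepsilon - \alpha_{i,c}\omega_{i,c} - \delta_{i,c} > 0$ whenever (\ref{Problem 2.1}) is satisfied. Beyond this, a minor point worth noting is that the sign of $\omega_{i,c}$ is not fixed a priori, but because the argument of the $\log$ is affine in $\alpha_{i,c}$ the concavity argument is insensitive to that sign, so no case split is needed.
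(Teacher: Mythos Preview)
Your proposal is correct and follows essentially the same approach as the paper: establish concavity of the objective and convexity (affinity) of the constraints, then invoke the definition of a convex program. The only cosmetic difference is that the paper verifies concavity by directly computing the second derivative of the objective in $\alpha_{i,c}$ and observing it is nonpositive, whereas you obtain the same conclusion via the composition rule (concave $\log$ composed with an affine argument); your extra remarks on the positivity of the log argument and the irrelevance of the sign of $\omega_{i,c}$ are sound refinements not spelled out in the paper.
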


\begin{proof}
The second-order derivative of $U(\bm{\alpha_c})$ with respect to $\bm{\alpha_c}$ is:
\begin{equation}\label{Lemma 1}
\frac{\partial^2 U(\bm{\alpha_c})}{\partial \bm{\alpha_c}^2} = - \frac{\beta_Q{\omega_{i,c}}^2}{\Big(1 + \varepsilon - \alpha_{i,c} \omega_{i,c} - \delta_{i,c}\Big)^2\ln2} \leq 0,
\end{equation}
Thus the objective function $U(\bm{\alpha_c})$ is convex. Combining with the linear convex Constraints (\ref{Problem 2.1}) and (\ref{Problem 2.2}), Problem P2.1 is a convex optimization problem according to the definition of convex optimization problem~\cite{Boyd:2004}.
\end{proof}

According to Lemma~\ref{lm1}, Problem P2.1 is a convex optimization problem. Here we use genetic algorithm to solve it.

\subsubsection{Optimization of Computation Resource}

After determination of \bm{$\alpha_c$}, we need to update the values of \bm{$\alpha_e$} and \bm{$f$}. Based on the above decoupling method, the formulation of computation resource optimization under given \bm{$\alpha_c$} and \bm{$\alpha_e$} is:
\begin{align}
 P2.2:\underset{\bm{f}}{max} & \sum_{i=1}^N \beta_P  \Big(c_v (\alpha_{i,c} + \alpha_{i,e}) D_i R_i - c_s (f_{i,P} + f_{i,CS})\Big)  \notag \\
 &\quad\ + \beta_Q \log\Big(1 + \varepsilon  - T_i\Big)  \, , \label{Problem 3} \\
 s.t. \quad &\sum_{i = 1}^N f_{i,P} \leq F_{P}; \tag{\ref{Problem 3}{{a}}}\label{Problem 3.1}\\
  & \frac{\alpha_{i,e}D_i}{\rho_{i,P}} + \frac{\alpha_{i,e}D_iR_i}{f_{i,P}} \leq T_i^{max}\quad \forall i \in [1,\cdots,N]. \tag{\ref{Problem 3}{{b}}}\label{Problem 3.2}
\end{align}
Note that the local computing delay and cloud computing delay are all constants under given \bm{$\alpha_c$} and \bm{$\alpha_e$}, we update the Constraint (\ref{Problem 1.1}) as Constraint (\ref{Problem 3.2}).

The objective function $U(\bm{f})$ in Problem P2.2 is convex since $\displaystyle{\frac{\partial^2 U(\bm{f})}{\partial \bm{f}^2} = - \frac{ 2 \beta_Q \alpha_{i,e} D_i R_i }{f_{i,P}^3 \Big(1 + \varepsilon - \frac{\alpha_{i,e}D_i}{\rho_{i,P}} - \frac{\alpha_{i,e}D_iR_i}{f_{i,P}} \Big)\ln2}}$ $\displaystyle{- \frac {\beta_Q (\alpha_{i,e} D_i R_i)^2}{f_{i,P}^4 \Big(1 + \varepsilon - \frac{\alpha_{i,e}D_i}{\rho_{i,P}} - \frac{\alpha_{i,e}D_iR_i}{f_{i,P}} \Big)^2\ln2}  \leq 0}$\vspace{1ex}, and Constraints (\ref{Problem 3.1}) and (\ref{Problem 3.2}) are linear convex. According to Lemma~\ref{lm1}, Problem P2.2 is a convex optimization problem and can be solved by interior point method. The first step of interior point method is to construct a penalty function, which can be expressed as:
\begin{equation}\label{interior point method}
\begin{split}
\ \varphi\Big(\bm{f},r^{(k)}\Big) =& \ U\Big(\bm{f}\Big) \\
& - r_1^{(k)} \sum_{i=1}^N \frac{1}{{\frac{\alpha_{i,e}D_i}{\rho_{i,P}} + \frac{\alpha_{i,e}D_iR_i}{f_{i,P}} - T_i^{max}}}\\
& - r_2^{(k)} \frac{1}{\sum_{i = 1}^N f_{i,P} - F_{P}} \, ,
\end{split}
\end{equation}
where $r^{(k)}$ are penalty factors and obey the law of decline, i.e., $r^{(0)} > r^{(1)} > r^{(2)} > \cdots > r^{(k)} > r^{(k+1)} > \cdots > 0$ and $\underset{k\to\infty}{\lim}r^{(k)} = 0$.

The second step is setting $\displaystyle{\frac{\partial \varphi}{\partial \bm{f}} = 0}$ \vspace{1ex} to find the extreme point $\bm{f^*}$ of the penalty function. Then substitute $\bm{f^*}$ into the condition of convergence $ \Vert \,f^*(r^k) - f^*(r^{k-1})\, \Vert \leq \varepsilon_1$, where $\varepsilon_1$ is the threshold that we set in advance. If $\bm{f^*}$ satisfies the above condition, the process will be finished, and we can obtain $\bm{f} = \bm{f^*}$. Otherwise, let $k = k + 1$ and update $r^{(k)}$ continuously until satisfying the above condition.

\subsubsection{Optimization of Offloading Ratio for ES}

Similar to the optimization of offloading ratio for CS, $T_i$ can be approximated as:
\begin{equation}\label{Approximate processing delay for ES}
\begin{split}
\ T_i &\leq \frac {T_i^{local} + T_{i,P}^{comm} + T_{i,P}^{comp} + T_{i,P}^{comm} + T_{i,CS}^{comm} + T_{i,CS}^{comp}}{\lambda} \\
& = \frac {\Big(1 - \alpha_{i,e} - \alpha_{i,c}\Big) D_iR_i}{\lambda f_{i,V}} + \frac{\alpha_{i,e}D_i}{\lambda\rho_{i,P}} + \frac{\alpha_{i,e}D_iR_i}{\lambda f_{i,P}} \\
& \quad + \frac{\alpha_{i,e}D_i}{\lambda\rho_{i,P}} + \frac{\alpha_{i,c}D_i}{\lambda\rho_{i,CS}} + \frac {\alpha_{i,c}D_iR_i}{\lambda f_{i,CS}} \\
& = \alpha_{i,e} \omega_{i,e} +  \delta_{i,e}  \, ,
\end{split}
\end{equation}
where $\displaystyle{\omega_{i,e} = \frac{2D_i}{\lambda \rho_{i,P}} + \frac {D_iR_i}{\lambda f_{i,P}} - \frac {D_iR_i}{\lambda f_{i,V}}}$, $\displaystyle{\delta_{i,e} = \frac{\alpha_{i,c}D_i}{\lambda \rho_{i,CS}} + \frac {(1 - \alpha_{i,c})D_iR_i}{\lambda f_{i,V}}} + \frac {\alpha_{i,c}D_iR_i}{\lambda f_{i,CS}}$, and $\lambda$ is a known value to adjust the approximation degree. Then we can reformulate Problem P1 as:
\begin{align}
 P2.3:\underset{\bm{\alpha_e}}{max}\ &\sum_{i=1}^N \ \beta_P  \Big(c_v (\alpha_{i,c} + \alpha_{i,e}) D_i R_i - c_s (f_{i,P} + f_{i,CS})\Big) \notag  \\
 & \quad\ \; + \beta_{Q} \log\Bigg(1 + \varepsilon - \alpha_{i,e} \omega_{i,e} -  \delta_{i,e}\Bigg) \, , \label{Problem 4} \\
 s.t.\quad  & 0 \leq \alpha_e \leq 1 - \alpha_c \quad \forall i \in [1,\cdots,N]; \tag{\ref{Problem 4}{{a}}}\label{Problem 4.1} \\
 \quad &\alpha_{i,e} \omega_{i,e} +  \delta_{i,e} \leq T_i^{max}\quad \forall i \in [1,\cdots,N]. \tag{\ref{Problem 4}{{b}}}\label{Problem 4.2}
\end{align}

The objective function $U(\bm{\alpha_e})$ of Problem P2.3 is convex since $\displaystyle{\frac{\partial^2 U(\bm{\alpha_e})}{\partial \bm{\alpha_e}^2} = - \frac{\beta_Q{\omega_{i,e}}^2}{\Big(1 + \varepsilon - \alpha_{i,e} \omega_{i,e} - \delta_{i,e}\Big)^2\ln2} \leq 0}$\vspace{1ex}, and Constraints (\ref{Problem 4.1}), (\ref{Problem 4.2}) are linear convex, thus the Problem P2.3 is a convex optimization problem according to Lemma~\ref{lm1}. The Lagrangian method~\cite{Sorkhoh:2019} is adopted to solve this problem. The Lagrangian function can be expressed as:
\begin{equation}\label{Lagrangian function}
\begin{split}
 \mathcal{L}_i(\bm{\alpha_e},\zeta,\eta,\theta) = & \ \beta_P  \Big(c_v (\alpha_{i,c} + \alpha_{i,e}) D_i R_i - c_s (f_{i,P} + f_{i,CS})\Big) \\
 & + \beta_{Q} \log\Bigg(1 + \varepsilon - \alpha_{i,e} \omega_{i,e} -  \delta_{i,e}\Bigg) \\
 & + \zeta_i\Big(\alpha_{i,e} \omega_{i,e} +  \delta_{i,e} - T_i^{max}\Big) \\
 & + \eta_i\Big(\alpha_{i,e} + \alpha_{i,c} - 1\Big) + \theta_i\Big(0 - \alpha_{i,e}\Big) \, ,
\end{split}
\end{equation}
where $\zeta$, $\eta$ and $\theta$ are Lagrangian multipliers. According to KKT condition~\cite{Boyd:2004}, the following conditions must be satisfied:
\begin{equation}\label{KKT condition}
\left\{\begin{lgathered}
 \zeta_i\Bigg(\alpha_{i,e} \omega_{i,e} +  \delta_{i,e} - T_i^{max}\Bigg) = 0; \\
 \eta_i\Big(\alpha_{i,e} + \alpha_{i,c} - 1\Big) = 0; \\
 \theta_i\Big(0 - \alpha_{i,e}\Big) = 0; \\
 \zeta_i > 0,\ \eta_i \geq 0,\ \theta_i \geq 0; \\
 \frac{\partial \mathcal{L}_i}{\partial \alpha_{i,e}} = 0.
\end{lgathered} \right.
\end{equation}

As mentioned in Constraint (\ref{Problem 4.1}), we know that $0 \leq \alpha_{i,e} \leq 1 - \alpha_{i,c} $. Thus, it is simple to solve $\eta_i(\alpha_{i,e} + \alpha_{i,c} - 1) = 0$ and $\theta_i(0 - \alpha_{i,e}) = 0$. The related discussion is as follows:

Case 1: $\eta_i > 0$ and $\theta_i > 0$. We get $\alpha_{i,e} = 0 $ and $\alpha_{i,c} = 1 $, i.e., vehicle $V_i$ offloads the whole task to the CS.

Case 2: $\eta_i > 0$ and $\theta_i = 0$. We get $\alpha_{i,e} = 1 - \alpha_{i,c} $, i.e., vehicle $V_i$ offloads the whole task to its selected ES and CS.

Case 3: $\eta_i = 0$ and $\theta_i > 0$. We get $\alpha_{i,e} = 0$, i.e., vehicle $V_i$ offloads $\alpha_{i,c}D_i$ part of its task to the CS and processes the rest locally.

Case 4: $\eta_i = 0$ and $\theta_i = 0$. We cannot get the unique solution of $\alpha_{i,e}$. Then it is necessary to utilize other conditions, since $\zeta_i > 0$, we can get:
\begin{align}\label{KKT condition 3}
& \alpha_{i,e} \omega_{i,e} +  \delta_{i,e} - T_i^{max} = 0 \notag \\
\Rightarrow \quad & \alpha_{i,e} = \frac{ T_i^{max} - \delta_{i,e}}{\omega_{i,e}} \, .
\end{align}

Then $\zeta_i$ can be calculated as:
\begin{align}\label{KKT condition 4}
& \frac{\partial \mathcal{L}_i}{\partial \alpha_{i,e}} = \beta_P c_v D_i R_i - \frac{\beta_Q \omega_{i,e}}{\Big(1 + \varepsilon - \alpha_{i,e} \omega_{i,e} -  \delta_{i,e}\Big)\ln2} + \zeta_i\omega_{i,e} = 0 \notag \\
\Rightarrow \quad & \zeta_i =  \frac{\beta_Q}{\Big(1 + \varepsilon - T_i^{max}\Big)\ln2} - \frac{\beta_P c_v D_iR_i}{\omega_{i,e}}  \,
\end{align}

And we can get $\alpha_i$ as follows:
\begin{equation}\label{KKT condition 5}
\ \alpha_{i,e} = \begin{cases}
\displaystyle{\frac{ T_i^{max} - \delta_{i,e}}{\omega_{i,e}}}, & \text{if } \eta_i = 0 \text{ and } \theta_i = 0; \\
1 - \alpha_{i,c}, & \text{if } \eta_i > 0 \text{ and } \theta_i = 0; \\
0, &\text{otherwise}.
\end{cases}
\end{equation}

\subsection{Decision in general region}

Different from overlapping region, each vehicle needs to select one RSU as its offloading target in general region. Therefore, it is necessary to calculate \bm{$S$}. The matching method is adopted to get the optimal selection. Determining the appropriate matching weight between vehicles and RSUs is one of the challenges. The processing delay for edge computing in general region includes the moving time, which is affected by the distance between vehicles and RSUs. Thus, we use the distance as the weight of matching method.

The first step is to construct a weighted bipartite graph $\mathcal G(\mathcal U, \mathcal V)$ to establish the relation between vehicles and RSUs. The set $\mathcal U$ represents the vehicles in general region, and the set $\mathcal V$ represents the RSUs along the unidirectional road.

The second step is to utilize Kuhn-Munkres algorithm to get the minimum weight matching after which \bm{$S$} can be obtained. Specifically, $S_{i,j} = 1$ if vehicle $V_i$ selects RSU $E_j$ as its offloading target, and $S_{i,j} = 0$ otherwise.

According to the selection decision \bm{$S$}, there may be some vehicles that need to move to the communication coverage of their selected RSUs. Cloud-Terminal cooperation is utilized for the movement to reduce the total task processing delay. Due to the fact that ESs are resource-limited, vehicle $V_i$ will offload as high ratio of its task as possible to CS to save the resource of ESs. Thus, we set $T_{i,CS}^{comm^*} = T^{move}_{i,j}$, which represents the maximum offloading ratio for CS. Then the task characteristics of vehicle $V_i$ can be updated as:
\begin{align}
&D_i^* = D_i - \frac{T^{move}_{i,j}f_{i,V}}{R_i} - T^{move}_{i,j}\rho_{i,CS} ; \label{D}   \\
&T_i^{max^*} = T_i^{max} - T^{move}_{i,j}. \label{T}
\end{align}

The decisions of \bm{$\alpha_e$}, \bm{$\alpha_c$} and \bm{$f$} of the new optimization problem are similar to the decisions in overlapping region, so we do not repeat them again.

\subsection{The MSCET Schedule Algorithm}

The pseudo of MSCET schedule algorithm is shown in Algorithm~\ref{alg1}. Vehicle $V_i$ will send the message about the characteristics $\lbrace D_i,R_i,T_i^{max} \rbrace$ of its task to the system when it cannot process the task in time. And then the system calculates the offloading schedule based on the messages from all the vehicles in the area. Specifically, there are two cases according to the location of each vehicle. In general region, each vehicle needs to select one RSU as its offloading target by Kuhn-Munkres matching method while in overlapping region the RSU selection is unnecessary. Then, Cloud-Terminal cooperation is utilized to improve the system utility during the moving time in general region. After that, Cloud-Edge-Terminal cooperation is introduced to further improve the utility. Last but not the least, the genetic algorithm, interior point method and KKT condition are adopted to optimize \bm{$\alpha_c$}, \bm{$f$} and \bm{$\alpha_e$} until getting the final result.

The complexities of initialization, Phase 1 and Phase 2 are $\mathcal{O}( 3N + 3 )$, $\mathcal{O}( N^4 )$ and $\mathcal{O}( 3N )$ respectively. While for $k$ iterations, the complexity of the inner loop is $\mathcal{O}( k( 2N + 1) )$. Then the total complexity of middle loop, for $z$ iterations, is $\mathcal{O}( z (k( 2N + 1) + N))$. After that, for $s$ iterations, the complexity of Phase 3 is $\mathcal{O}(sN^2( z (k( 2N + 1) + N)))$. Considering the dominant term, the time complexity of Algorithm~\ref{alg1} is $\mathcal{O}( N^4 + 2skzN^3)$.

\begin{algorithm}[!h]
\caption{MSCET Schedule Algorithm}
\begin{algorithmic}[1]\label{alg1}
\REQUIRE
$D_i$: the size of task input data of vehicle $V_i$; \\
\quad\;\; $R_i$: {the number of CPU cycles required to process\\
\quad\quad\quad\ 1-bit of the task of vehicle $V_i$;} \\
\quad\;\; $T^{max}_i$: the maximum tolerant delay of vehicle $V_i$'s task.
\ENSURE
Offloading schedule consisting of \bm{$S$}, \bm{$\alpha_c$}, \bm{$\alpha_e$}and \bm{$f$}.
\STATE Initialization: \\
\quad Set initial feasible solution $\alpha_e^{(0)}$, $\alpha_c^{(0)}$, $f^{(0)}$; \\
\quad Set iteration number $s = 1$, $z = 1$, $k = 1$;
\IF {$l_i = 0$}
\STATE /*Phase 1: Selection of RSU*/
\STATE Get the minimum weight matching and $S_{i,j}$;
\STATE /*Phase 2: Cloud-Terminal Cooperation*/ \vspace{1ex}
\STATE Let $\displaystyle T_{i,CS}^{comm^*} = T^{move}_{i,j}$ \vspace{1ex}and update $D_i$ and $T^{max}_i$ based on Eq.~(\ref{D}) and Eq.~(\ref{T});
\ENDIF
\STATE /*Phase 3: Cloud-Edge-Terminal Cooperation*/
\STATE /*Phase 3.1: Optimization of offloading ratio for CS*/
\REPEAT
\STATE Utilize genetic algorithm to optimize $\alpha_{i,c}$;
\STATE $s = s + 1$;
\STATE /*Phase 3.2: Optimization of offloading ratio for ES*/
\REPEAT
\STATE /*Phase 3.3: Optimization of computation resource*/
\REPEAT
\STATE Construct the penalty function $\varphi\Big(f,r^{(k)}\Big)$ based on Eq.~(\ref{interior point method}) and set $\displaystyle{\frac{\partial \varphi}{\partial f} = 0}$ \vspace{1ex} to find the extreme point $\bm{f^*}$ \vspace{1ex} of the penalty function;
\STATE Substitute $\bm{f^*}$ \vspace{1ex} into the convergence condition \\
$\displaystyle \Vert \,f^*(r^k) - f^*(r^{k-1})\, \Vert \leq \varepsilon_1$ \vspace{1ex} to verify its feasibility;
\STATE $k = k + 1$ and update $\displaystyle r^{(k)}$;
\UNTIL Satisfaction
\STATE Calculate $\alpha_{i,e}^{(z)}$ based on Eq.~(\ref{KKT condition 5});
\STATE $z = z + 1$;
\UNTIL Convergence
\UNTIL Convergence
\end{algorithmic}
\end{algorithm}

\section{Performance Evaluation}\label{sec5}

In this section, we evaluate the proposed multi-scenario offloading schedule (MSCET) and compare it with two benchmark schedules and two different modes. We first describe the simulation settings, and then show the schedules' reliability. Afterwards we conduct the performance comparisons to illustrate the effectiveness of the proposed MSCET.

\subsection{Simulation Settings}

As mentioned, there are two scenarios in the vehicular network (i.e., overlapping region and general region). In general region, we consider a 250-meter unidirectional road, along which 5 RSUs are deployed. Each RSU is equipped with an ES, which has the computation resource of 0.5 GHz. While in overlapping region, the virtual resource pool is constructed to integrate resource of ESs, and the total resource of each virtual resource pool is 2.5 GHz. There are some vehicles running at a speed of 40 km/h. Each vehicle $V_i$ has a computation-intensive or delay-sensitive task with three characteristics $\lbrace D_i,R_i,T_i^{max} \rbrace$. We set $D_i$ and $T_i^{max}$ to be randomly distributed in the range of $[10,15]$MB and $[4,6]$s respectively. The local computation resource of each vehicle is 12.5 MHz.

For performance comparison, we consider the following two benchmarks schedules for Cloud-Edge-Terminal cooperation and two cooperation modes using our proposed MSCET schedule.

\textit{1) Given Ratio and Resource based Selection optimization (SGRR)}: The schedule selects the optimal offloading target for each vehicle, under a given offloading ratio and computation resource.

\textit{2) Nearby}: The schedule optimizes computation resource and offloading ratio, after which the vehicles offload their tasks to nearby RSUs.

\textit{3) Cloud-Terminal}: Each task is jointly processed in the terminal and cloud server.

\textit{4) Edge-Terminal}: Each task is jointly processed in the terminal and edge server.

\subsection{Convergence}

To verify the reliability of the proposed MSCET schedule, we evaluate the convergence of algorithm under different initial points. The algorithm can converge only if the inner loop can converge, thus the convergence of inner loop is omitted here and only the convergence of algorithm is plotted, i.e. the convergence of system utility. Fig.~\ref{iteration} shows that the system utility can converge within 6 iterations and almost to the same value under different initial points, indicating the reliability of the proposed MSCET schedule.

\begin{figure}
  \centering
  \includegraphics[width=2.9in]{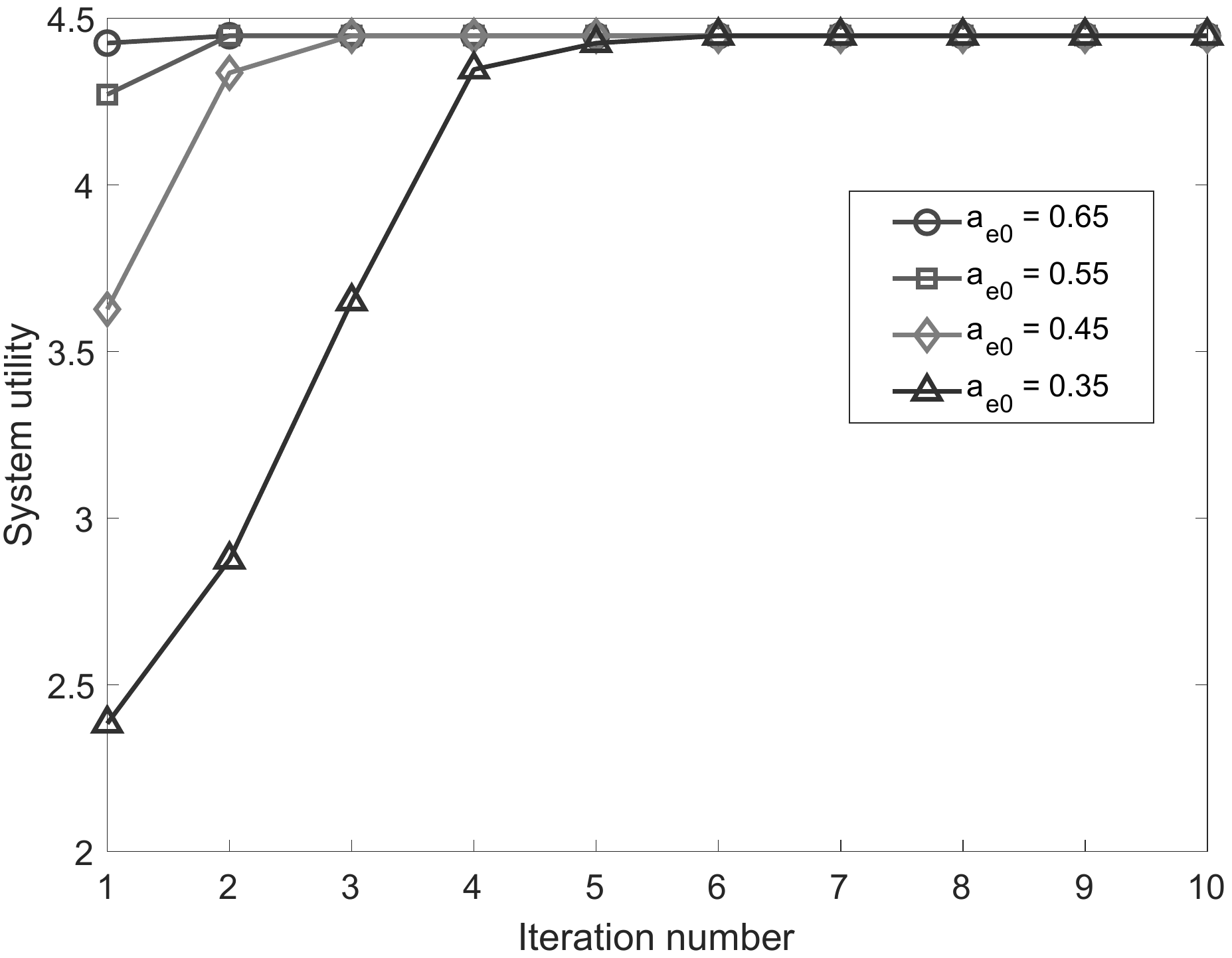}
  \caption{Convergence of system utility under different initial points.}\label{iteration}
\end{figure}

\subsection{Impact of RSUs' Coverage Size}

To compare the scalability of different schedules, we study the impact of the radius of RSUs' coverage on system utility in Fig.~\ref{RSUsize}. The results illustrate that the system utility of the Nearby will be decreased with the increasing radius of RSUs' coverage, while the coverage size has no impact on the system utility of the other two schedules, which means that the two schedules have a better scalability. The reason is that both the MSCET and SGRR consider the offloading selection optimization, making them not sensitive to the increase of coverage size. However, for the Nearby, with the expansion of RSUs' coverage, the increasing number of vehicles in RSUs' coverage may cause that more vehicles connect to the same RSU, which may lead to the overload of RSU. In this case some tasks cannot be processed in the current time slot, and the system utility may become lower. Besides, the performance of the Nearby may fluctuate when some vehicles are located in the boundary of RSUs, shown as the fluctuation in Fig.~\ref{RSUsize}. Moreover, the MSCET outperforms the SGRR due to its joint optimization of resource allocation and offloading selection.

\begin{figure}
  \centering
  \includegraphics[width=3in]{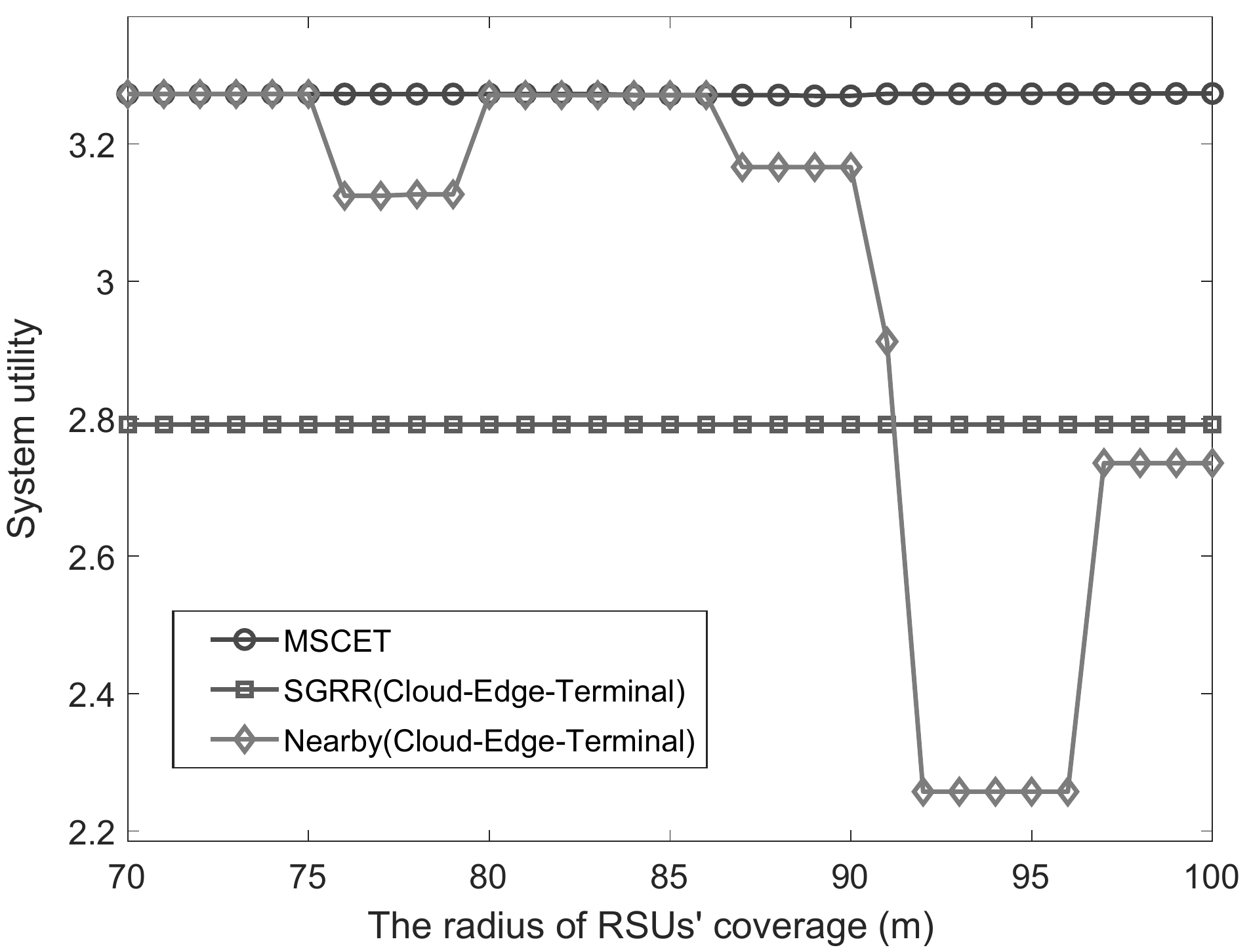}
  \caption{Performance comparison of system utility under different sizes of RSUs' coverage.}\label{RSUsize}
\end{figure}

\begin{figure}
  \centering
  \includegraphics[width=2.9in]{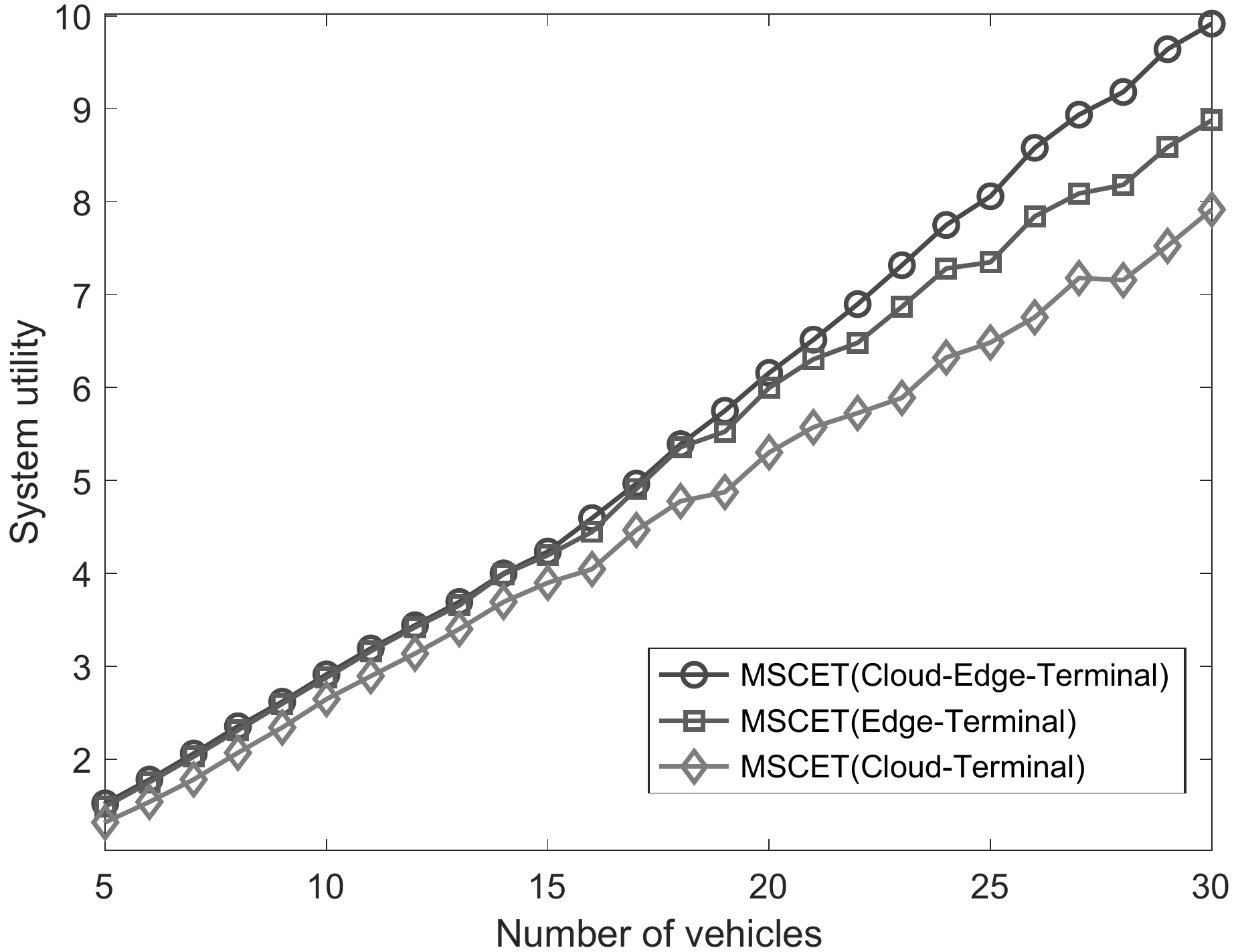}
  \caption{Performance comparison of system utility under different cooperation modes with the increase of vehicles in general region.}\label{models}
\end{figure}

\begin{figure*}
\centering \subfigure[Resource-rich]{
\includegraphics[width=1.98in]{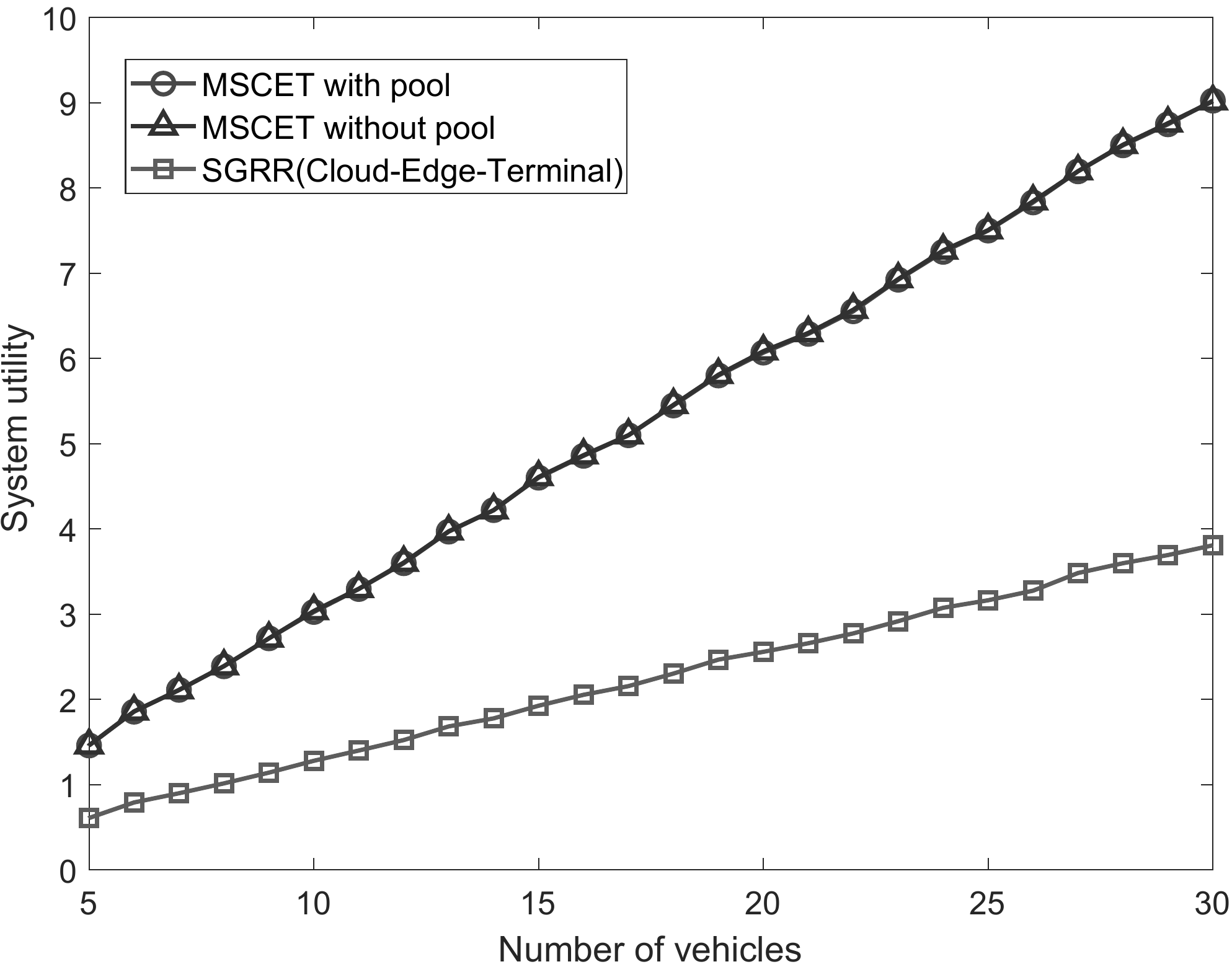}}
\subfigure[One resource-limited ES]{
\includegraphics[width=1.95in]{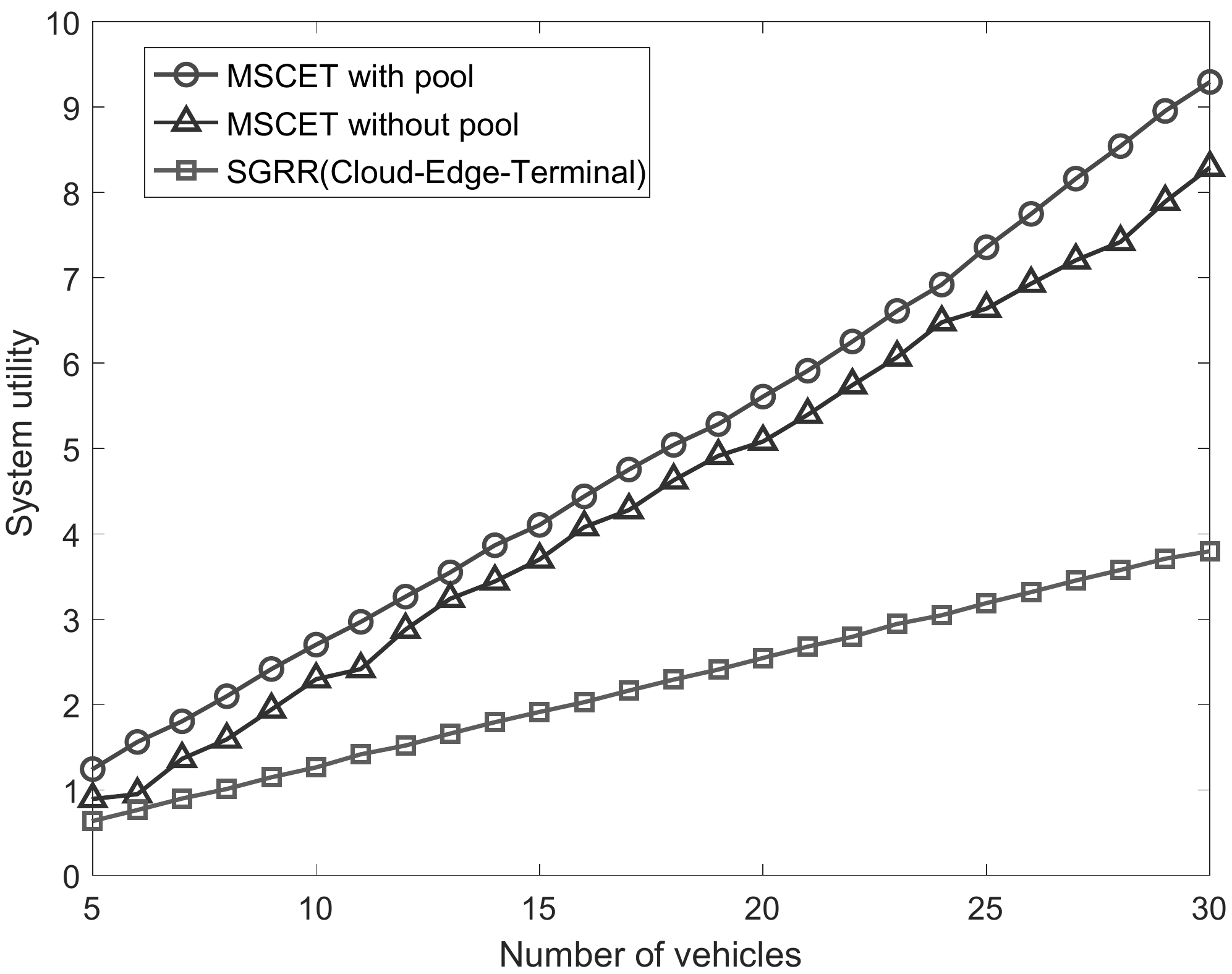}}
\subfigure[Three resource-limited ESs]{
\includegraphics[width=1.95in]{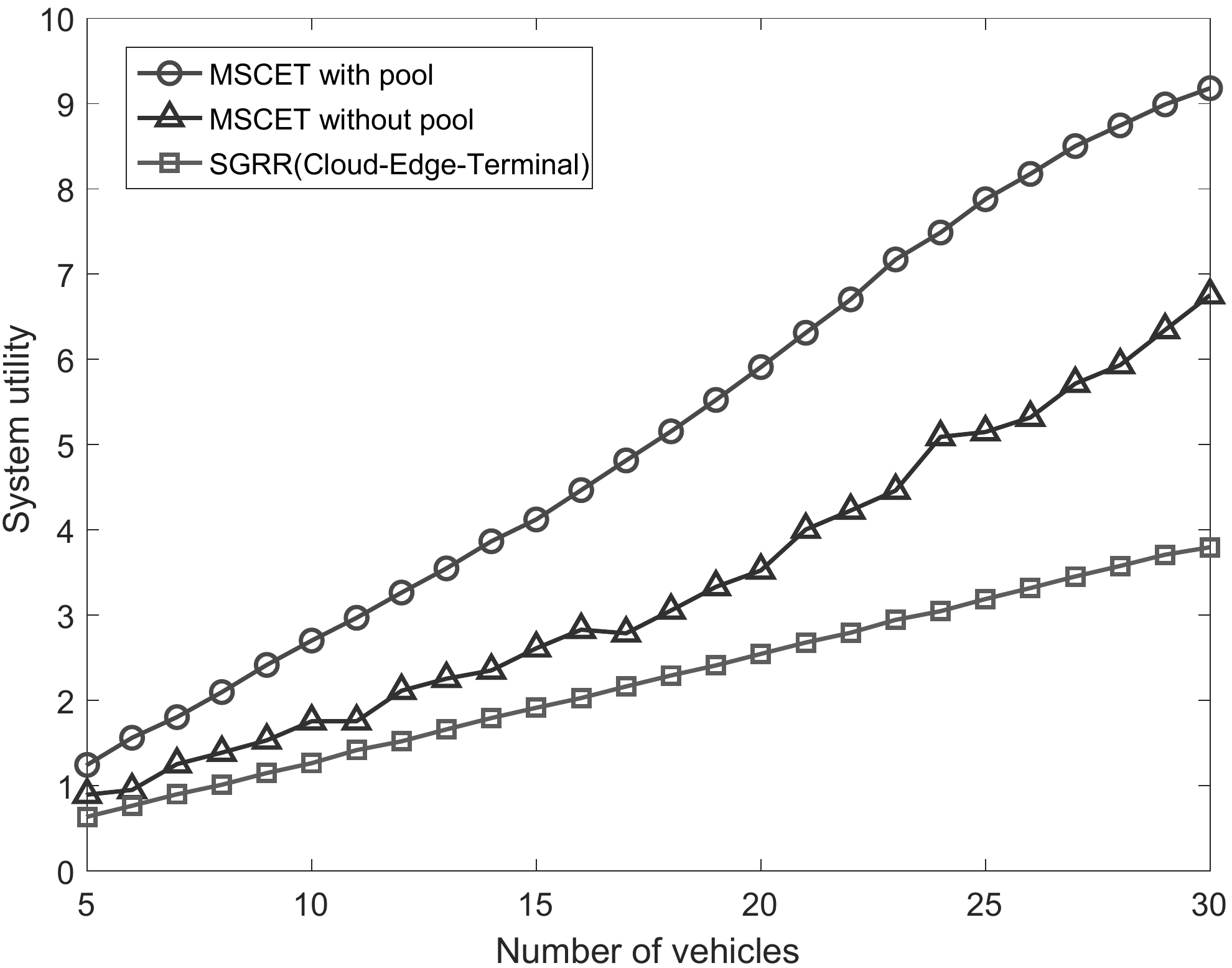}}
 \caption{{Performance comparison of system utility under different amount of resource with the increase of vehicles in overlapping region.}}\label{systemutilityO}
\end{figure*}

\subsection{Advantages of Cloud-Edge-Terminal Cooperation}

In this subsection, we compare the performance of the proposed MSCET with different cooperation modes in general region. Fig.~\ref{models} shows the impact of the increasing number of vehicles on the system utility between different cooperation modes. The results illustrate that with the increasing number of vehicles, the utility under all the three modes will increase. This is because that the increasing number of vehicles will produce more tasks that need to be processed. And Cloud-Terminal cooperation performs a bit worse than other modes due to the long communication delay between CS and vehicles. Furthermore, the performance of MSCET is similar to that of Edge-Terminal when there are fewer vehicles. While MSCET outperforms Edge-Terminal when the number of vehicles exceeds a certain threshold. The reason behind is that the proposed MSCET is a collaborative offloading schedule combining ESs and CS and the tasks can be processed locally and in servers (i.e., ESs and CS) simultaneously. In other words, the tasks can be reasonably offloaded to reduce the impact of the resource-limited ESs. Moreover, the proposed MSCET can make full use of the moving time to process the tasks by utilizing Cloud-Terminal cooperation while Edge-Terminal can only utilize local resource to process the tasks until reaching the coverage of the selected RSUs.

\subsection{Advantages of virtual resource pool}

To illustrate the benefits of virtual resource pool, we compare the performance of MSCET with and without the virtual resource pool under different amount of resource. Fig.~\ref{systemutilityO} illustrates that the performance of MSCET with or without the virtual resource pool is similar when the resource in overlapping region is abundant. However, the performance of MSCET without the pool will be worse when the resource is limited and the performance difference between MSCET with and without the pool increases with the decrease of resource. As the virtual resource pool integrates the resource from multiple distributed ESs. When some ESs are resource-limited, the pool can invoke the resource from other ESs to cooperate on the tasks. In contrast, ESs that without the assistance of the pool can only utilize their own resource to process the tasks and can be easily affected by the amount of the resource.

\section{Conclusions and Future Work}\label{sec6}

In this paper, we study the task offloading problem for vehicular networks and propose MSCET, a multi-scenario offloading schedule for biomedical data processing and analysis tasks in Cloud-Edge-Terminal collaborative vehicular networks, which integrates the Cloud-Edge-Terminal cooperation and virtual resource pool to improve the efficiency and reliability. We consider both the profits and QoS as the system utility, and formulate the system utility maximization problem. We then show that the problem is NP-hard and present an algorithm consisting of the Kuhn-Munkres matching method, genetic algorithm, interior point method and KKT condition to obtain efficient solutions. The parameters of the proposed MSCET are optimized to maximize the system utility. Finally, we conduct extensive simulations to evaluate the performance of the proposed MSCET and the simulation results illustrate that the MSCET outperforms existing schedules. However, some assumptions in our proposed MSCET are based on ideal condition and there may be some differences with the real scenario. In the future, we intend to extend the MSCET schedule to be applicable for the real-world scenario with implementation.

\section*{ACKNOWLEDGEMENT}
This work was supported in part by NSFC grants (No.61772551, No.62111530052),  and the Major Scientific and Technological Projects
of CNPC under Grant ZD2019-183-003.



%
\bibliographystyle{IEEEtranS}

\bibliography{IEEEabrv,IEEEtran}

\begin{thebibliography}{10}
\providecommand{\url}[1]{#1}
\csname url@samestyle\endcsname
\providecommand{\newblock}{\relax}
\providecommand{\bibinfo}[2]{#2}
\providecommand{\BIBentrySTDinterwordspacing}{\spaceskip=0pt\relax}
\providecommand{\BIBentryALTinterwordstretchfactor}{4}
\providecommand{\BIBentryALTinterwordspacing}{\spaceskip=\fontdimen2\font plus
\BIBentryALTinterwordstretchfactor\fontdimen3\font minus
  \fontdimen4\font\relax}
\providecommand{\BIBforeignlanguage}[2]{{%
\expandafter\ifx\csname l@#1\endcsname\relax
\typeout{** WARNING: IEEEtranS.bst: No hyphenation pattern has been}%
\typeout{** loaded for the language `#1'. Using the pattern for}%
\typeout{** the default language instead.}%
\else
\language=\csname l@#1\endcsname
\fi
#2}}
\providecommand{\BIBdecl}{\relax}
\BIBdecl

\bibitem{Aixin:2020}
X.~Ai, H.~Chen, K.~Lin, Z.~Wang, and J.~Yu, ``{Nowhere to Hide: Efficiently
  Identifying Probabilistic Cloning Attacks in Large-Scale RFID Systems},''
  \emph{IEEE Transactions on Information Forensics and Security}, vol.~16, pp.
  714--727, 2021.

\bibitem{Boyd:2004}
S.~Boyd and L.~Vandenberghe, \emph{{Convex Optimization}}.\hskip 1em plus 0.5em
  minus 0.4em\relax Cambridge university press, 2004.

\bibitem{Chen:2021}
H.~Chen, S.~Wang, N.~Jiang, Z.~Li, N.~Yan, and L.~Shi, ``{Trust-aware
  Generative Adversarial Network with Recurrent Neural Network for Recommender
  Systems},'' \emph{International Journal of Intelligent Systems}, vol.~36, pp.
  778--795, 2021.

\bibitem{Chen:20182}
H.~Chen, Z.~Wang, F.~Xia, Y.~Li, and L.~Shi, ``{Efficiently and Completely
  Identifying Missing Key Tags for Anonymous RFID Systems},'' \emph{IEEE
  Internet of Things Journal}, vol.~5, pp. 2915--2926, 2018.

\bibitem{Xu:2016}
X.~{Chen}, L.~{Jiao}, W.~{Li}, and X.~{Fu}, ``Efficient multi-user computation
  offloading for mobile-edge cloud computing,'' \emph{IEEE/ACM Transactions on
  Networking}, vol.~24, no.~5, pp. 2795--2808, 2016.

\bibitem{Dai:2019}
Y.~{Dai}, D.~{Xu}, S.~{Maharjan}, and Y.~{Zhang}, ``Joint load balancing and
  offloading in vehicular edge computing and networks,'' \emph{IEEE Internet of
  Things Journal}, vol.~6, no.~3, pp. 4377--4387, 2019.

\bibitem{Garey:1990}
M.~R. Garey and D.~S. Johnson, \emph{{Computers and Intractability; A Guide to
  the Theory of NP-Completeness}}.\hskip 1em plus 0.5em minus 0.4em\relax W.H.
  Freeman and Company, 1990.

\bibitem{Guo2:2018}
H.~{Guo}, J.~{Liu}, and J.~{Zhang}, ``Computation offloading for multi-access
  mobile edge computing in ultra-dense networks,'' \emph{IEEE Communications
  Magazine}, vol.~56, no.~8, pp. 14--19, 2018.

\bibitem{Huang:2012}
D.~{Huang}, P.~{Wang}, and D.~{Niyato}, ``A dynamic offloading algorithm for
  mobile computing,'' \emph{IEEE Transactions on Wireless Communications},
  vol.~11, no.~6, pp. 1991--1995, 2012.

\bibitem{Huang:2021}
Q.~Huang, W.~Yue, Y.~Yang, and L.~Chen, ``P2gt: Fine-grained genomic data
  access control with privacy-preserving testing in cloud computing,''
  \emph{IEEE/ACM Transactions on Computational Biology and Bioinformatics}, pp.
  1--1, 2021.

\bibitem{Kai:2021}
C.~Kai, H.~Zhou, Y.~Yi, and W.~Huang, ``Collaborative cloud-edge-end task
  offloading in mobile-edge computing networks with limited communication
  capability,'' \emph{IEEE Transactions on Cognitive Communications and
  Networking}, vol.~7, no.~2, pp. 624--634, 2021.

\bibitem{Khan:2019}
W.~Z. Khan, E.~Ahmed, S.~Hakak, I.~Yaqoob, and A.~Ahmed, ``{Edge computing: A
  survey},'' \emph{Future Generation Computer Systems}, vol.~97, pp. 219--235,
  2019.

\bibitem{Li:2019}
L.~{Li}, K.~{Ota}, and M.~{Dong}, ``Deepnfv: A lightweight framework for
  intelligent edge network functions virtualization,'' \emph{IEEE Network},
  vol.~33, no.~1, pp. 136--141, 2019.

\bibitem{Li:2021}
Z.~Li, H.~Chen, K.~Lin, V.~Shakhov, L.~Shi, and J.~Yu, ``From edge data to
  recommendation: A double attention-based deformable convolutional network,''
  \emph{Peer-to-Peer Networking and Applications}, pp. 1--14, 2021.

\bibitem{Lin:2021}
K.~Lin, H.~Chen, N.~Yan, Z.~Li, J.~Li, and N.~Jiang, ``Fast and reliable
  missing tag detection for multiple-group rfid systems,'' \emph{IEEE
  Transactions on Industrial Informatics}, pp. 1--1, 2021.

\bibitem{Liu2:2017}
J.~{Liu}, J.~{Wan}, B.~{Zeng}, Q.~{Wang}, H.~{Song}, and M.~{Qiu}, ``A scalable
  and quick-response software defined vehicular network assisted by mobile edge
  computing,'' \emph{IEEE Communications Magazine}, vol.~55, no.~7, pp.
  94--100, 2017.

\bibitem{Liu:2018}
Y.~{Liu}, S.~{Wang}, J.~{Huang}, and F.~{Yang}, ``A computation offloading
  algorithm based on game theory for vehicular edge networks,'' in \emph{Proc.
  IEEE International Conference on Communications (ICC)}, 2018, pp. 1--6.

\bibitem{Zhang:2019}
Y.~{Liu}, H.~{Yu}, S.~{Xie}, and Y.~{Zhang}, ``Deep reinforcement learning for
  offloading and resource allocation in vehicle edge computing and networks,''
  \emph{IEEE Transactions on Vehicular Technology}, vol.~68, no.~11, pp.
  11\,158--11\,168, 2019.

\bibitem{Mao:2019}
Y.~{Mao}, C.~{You}, J.~{Zhang}, K.~{Huang}, and K.~B. {Letaief}, ``A survey on
  mobile edge computing: The communication perspective,'' \emph{IEEE
  Communications Surveys Tutorials}, vol.~19, no.~4, pp. 2322--2358, 2017.

\bibitem{Miettinen:2010}
A.~P. Miettinen and J.~K. Nurminen, ``{Energy efficiency of mobile clients in
  cloud computing},'' in \emph{Proc. USENIX Conference on Hot Topics in Cloud
  Computing}, vol.~10, 2010.

\bibitem{Lingjun:2019}
L.~{Pu}, X.~{Chen}, G.~{Mao}, Q.~{Xie}, and J.~{Xu}, ``Chimera: An
  energy-efficient and deadline-aware hybrid edge computing framework for
  vehicular crowdsensing applications,'' \emph{IEEE Internet of Things
  Journal}, vol.~6, no.~1, pp. 84--99, 2019.

\bibitem{Qiu:2019}
Y.~Qiu, T.~Misu, and C.~Busso, ``Analysis of the relationship between
  physiological signals and vehicle maneuvers during a naturalistic driving
  study,'' in \emph{2019 IEEE Intelligent Transportation Systems Conference
  (ITSC)}, 2019, pp. 3230--3235.

\bibitem{Seabolt:2020}
E.~E. Seabolt, G.~Nayar, H.~Krishnareddy, A.~Agarwal, K.~L. Beck,
  I.~Terrizzano, E.~Kandogan, M.~Kuntomi, M.~Roth, V.~Mukherjee, and J.~H.
  Kaufman, ``Ibm functional genomics platform, a cloud-based platform for
  studying microbial life at scale,'' \emph{IEEE/ACM Transactions on
  Computational Biology and Bioinformatics}, pp. 1--1, 2020.

\bibitem{Shen:2020}
B.~{Shen}, X.~{Xu}, F.~{Dai}, L.~{Qi}, X.~{Zhang}, and W.~{Dou}, ``Dynamic task
  offloading with minority game for internet of vehicles in cloud-edge
  computing,'' in \emph{2020 IEEE International Conference on Web Services
  (ICWS)}, 2020, pp. 372--379.

\bibitem{Song:2017}
Y.~{Song}, S.~S. {Yau}, R.~{Yu}, X.~{Zhang}, and G.~{Xue}, ``An approach to
  qos-based task distribution in edge computing networks for iot
  applications,'' in \emph{Proc. IEEE International Conference on Edge
  Computing (EDGE)}, 2017, pp. 32--39.

\bibitem{Sorkhoh:2019}
I.~{Sorkhoh}, D.~{Ebrahimi}, R.~{Atallah}, and C.~{Assi}, ``Workload scheduling
  in vehicular networks with edge cloud capabilities,'' \emph{IEEE Transactions
  on Vehicular Technology}, vol.~68, no.~9, pp. 8472--8486, 2019.

\bibitem{Wu:2021}
Q.~Wu, X.~Chen, Z.~Zhou, L.~Chen, and J.~Zhang, ``Deep reinforcement learning
  with spatio-temporal traffic forecasting for data-driven base station sleep
  control,'' \emph{IEEE/ACM Transactions on Networking}, vol.~29, no.~2, pp.
  935--948, 2021.

\bibitem{Feng:2015}
F.~{Xia}, A.~M. {Ahmed}, L.~T. {Yang}, and Z.~{Luo}, ``Community-based event
  dissemination with optimal load balancing,'' \emph{IEEE Transactions on
  Computers}, vol.~64, no.~7, pp. 1857--1869, 2015.

\bibitem{Feng:2018}
F.~{Xia}, A.~{Rahim}, X.~{Kong}, M.~{Wang}, Y.~{Cai}, and J.~{Wang}, ``Modeling
  and analysis of large-scale urban mobility for green transportation,''
  \emph{IEEE Transactions on Industrial Informatics}, vol.~14, no.~4, pp.
  1469--1481, 2018.

\bibitem{You:2017}
C.~{You}, K.~{Huang}, H.~{Chae}, and B.~{Kim}, ``Energy-efficient resource
  allocation for mobile-edge computation offloading,'' \emph{IEEE Transactions
  on Wireless Communications}, vol.~16, no.~3, pp. 1397--1411, 2017.

\bibitem{Yu:2020}
Z.~{Yu}, Y.~{Gong}, S.~{Gong}, and Y.~{Guo}, ``Joint task offloading and
  resource allocation in uav-enabled mobile edge computing,'' \emph{IEEE
  Internet of Things Journal}, vol.~7, no.~4, pp. 3147--3159, 2020.

\bibitem{Zeng:2017}
F.~{Zeng}, R.~{Zhang}, X.~{Cheng}, and L.~{Yang}, ``Channel prediction based
  scheduling for data dissemination in vanets,'' \emph{IEEE Communications
  Letters}, vol.~21, no.~6, pp. 1409--1412, 2017.

\bibitem{Zhang:2020}
J.~{Zhang}, H.~{Guo}, J.~{Liu}, and Y.~{Zhang}, ``Task offloading in vehicular
  edge computing networks: A load-balancing solution,'' \emph{IEEE Transactions
  on Vehicular Technology}, vol.~69, no.~2, pp. 2092--2104, 2020.

\bibitem{Zhao:2019}
J.~{Zhao}, Q.~{Li}, Y.~{Gong}, and K.~{Zhang}, ``Computation offloading and
  resource allocation for cloud assisted mobile edge computing in vehicular
  networks,'' \emph{IEEE Transactions on Vehicular Technology}, vol.~68, no.~8,
  pp. 7944--7956, 2019.

\bibitem{Zhao:2017}
J.~{Zhao}, Y.~{Liu}, K.~K. {Chai}, Y.~{Chen}, and M.~{Elkashlan},
  ``Many-to-many matching with externalities for device-to-device
  communications,'' \emph{IEEE Wireless Communications Letters}, vol.~6, no.~1,
  pp. 138--141, 2017.

\end{thebibliography}

%

\begin{IEEEbiography}[{\includegraphics[width=1in,height=1.25in,clip,keepaspectratio]{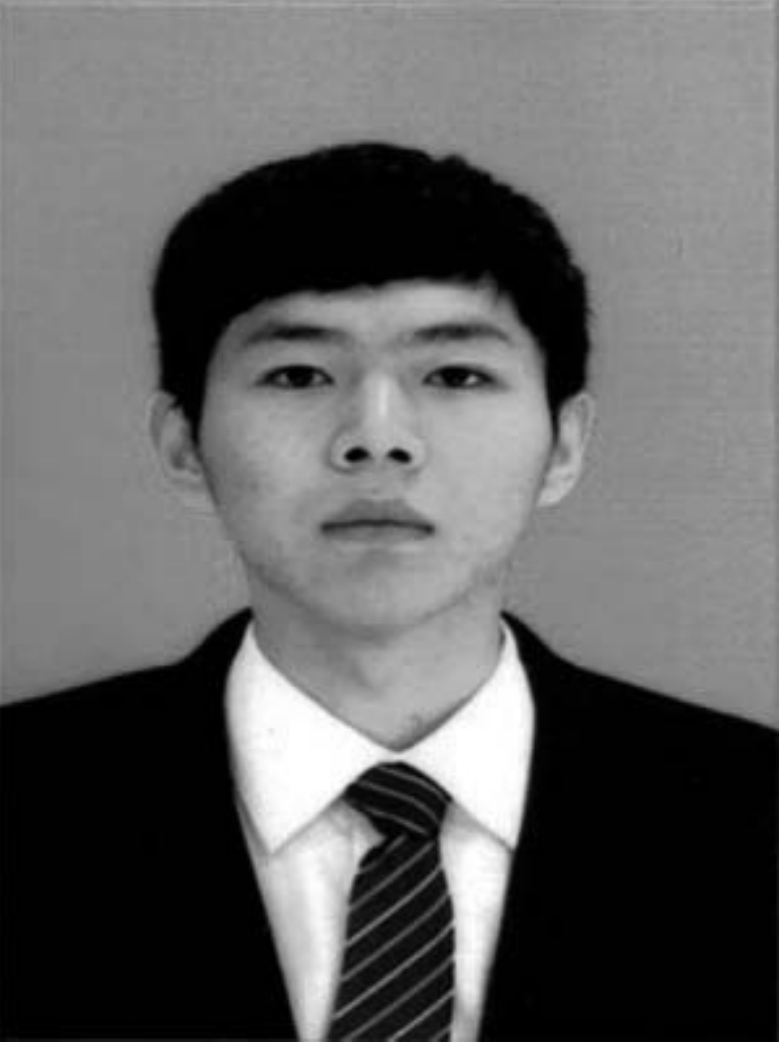}}]{Zhichen Ni}
  received the B.E. degree in measurement and control technology and instrumentation from China University of Petroleum, China, in 2019. He is currently pursuing his master degree in control science and engineering in the College of Control Science and Engineering, China University of Petroleum, China. His current research interests include edge computing and edge intelligence.
  \end{IEEEbiography}

  \begin{IEEEbiography}[{\includegraphics[width=1in,height=1.25in,clip,keepaspectratio]{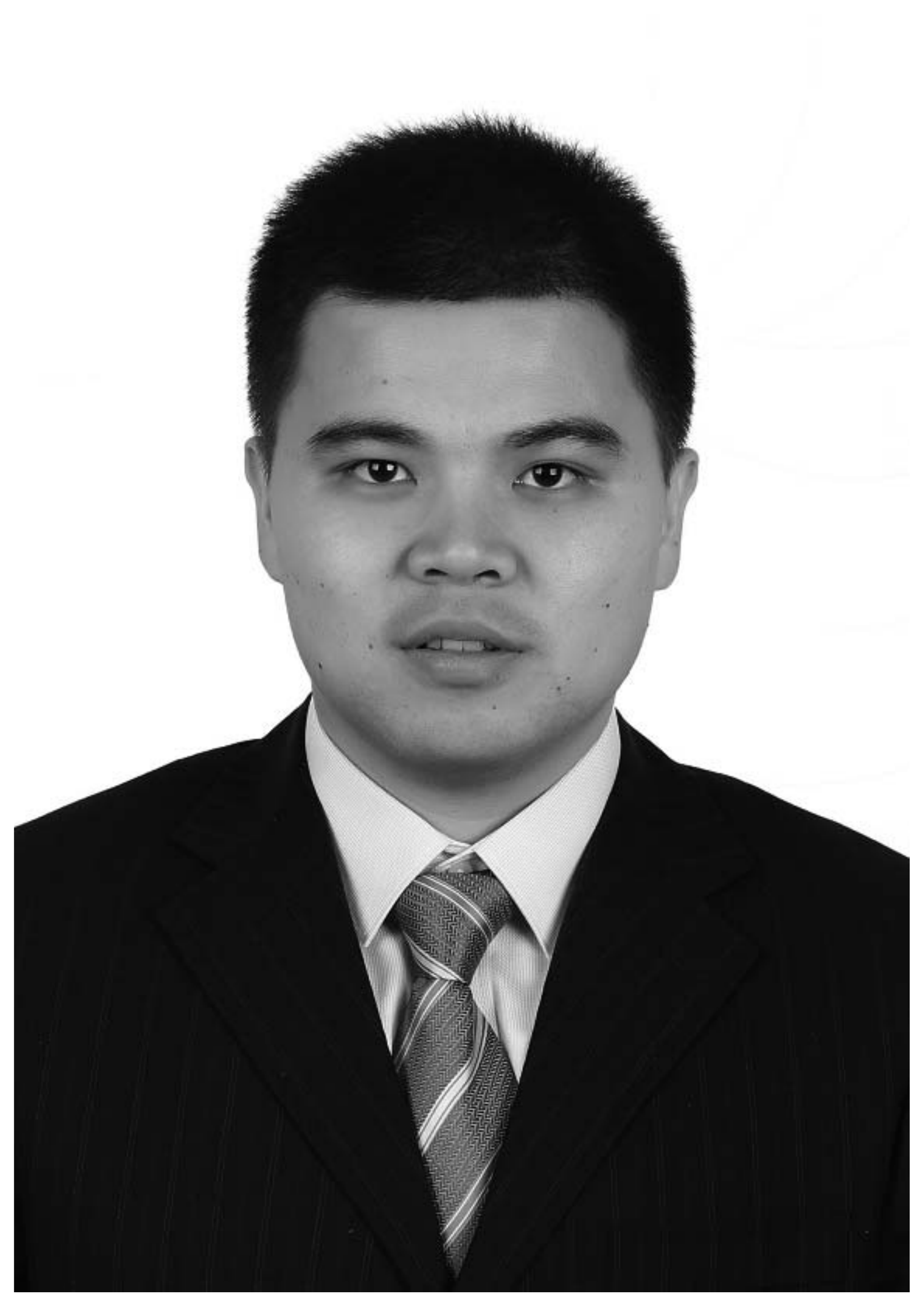}}]{Honglong Chen}
  received the M.E. degree in control science and
  engineering from Zhejiang University, China, in 2008, and
  the Ph.D degree in computer science from The Hong Kong Polytechnic
  University, Hong Kong, in 2012. He was a Postdoctoral Researcher in
  the School of CIDSE at Arizona State University from 2015
  to 2016. He is currently a Professor and Ph.D supervisor with the
  College of Control Science and Engineering, China University of
  Petroleum, China. His current research interests are in the areas of
  Internet of Things, edge computing and crowdsensing. He has published more than 80 research papers in prestigious journals and conferences including IEEE TIFS, IEEE TMC, IEEE TWC, IEEE TVT, IEEE IoT-J, IEEE INFOCOM, IEEE ICPP, IEEE ICDCS, etc. He is a senior member of IEEE and CCF (China Computer Federation), and a member of ACM.
  \end{IEEEbiography}

  \begin{IEEEbiography}[{\includegraphics[width=1in,height=1.25in]{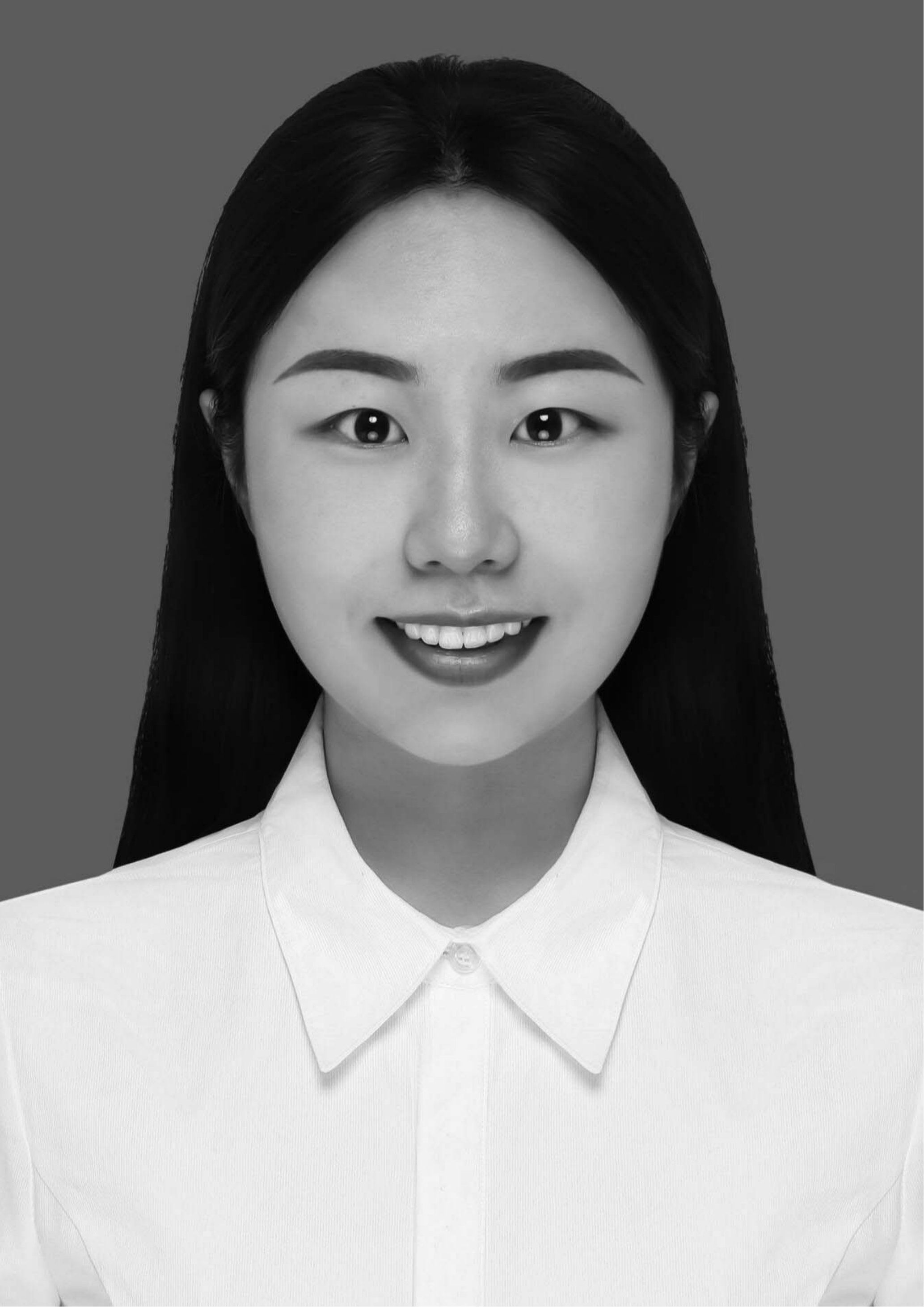}}]{Zhe Li}
  received the B. E. degree in automation from Shandong University of Science and Technology, China, in 2018. She is currently a graduate student in the College of Control Science and Engineering, China University of Petroleum, China. Her research interest is in the field of edge intelligence.
  \end{IEEEbiography}

  \begin{IEEEbiography}[{\includegraphics[width=1in,height=1.25in,clip,keepaspectratio]{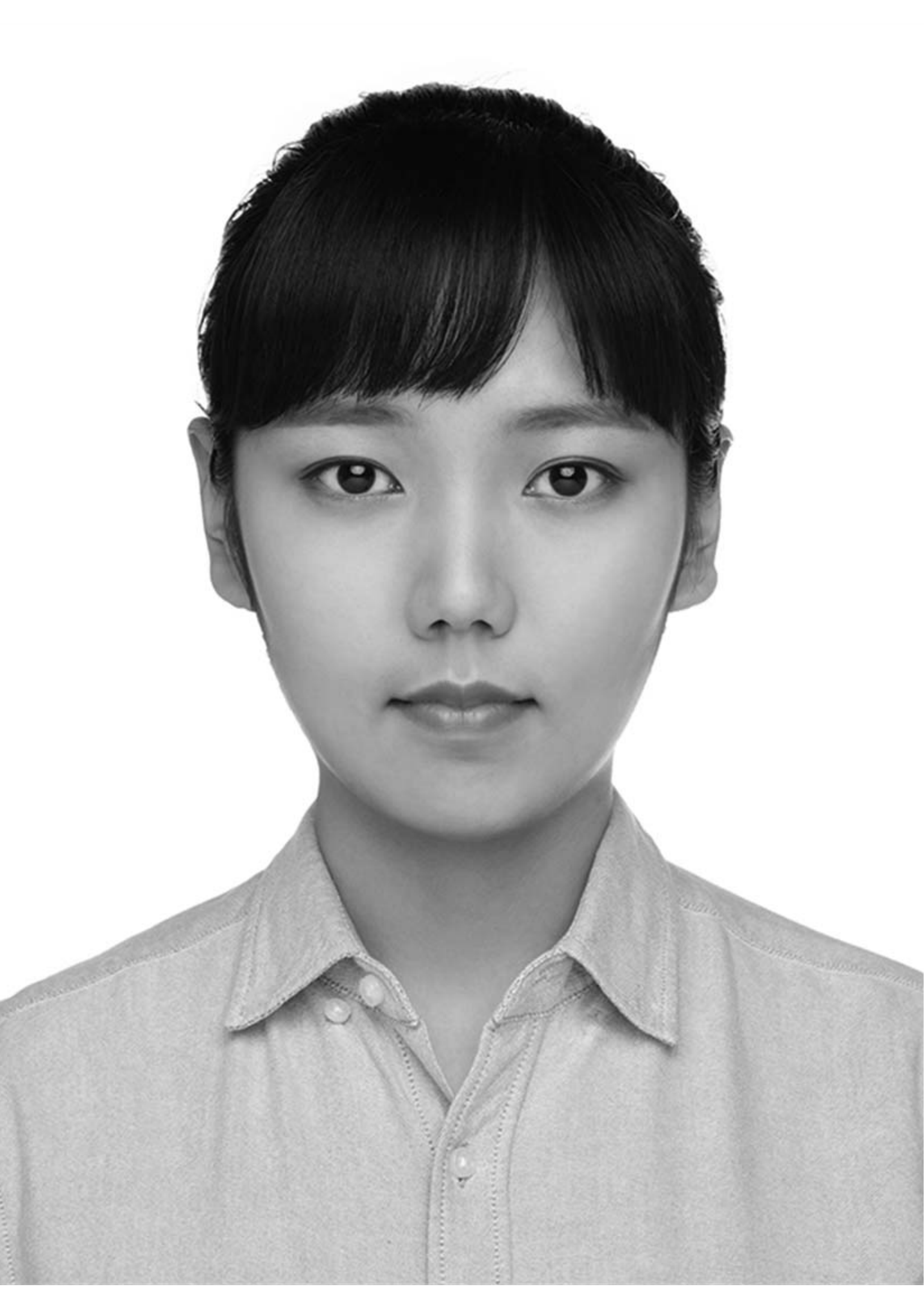}}]{Xiaomeng Wang}
  received the B.E. degree in measurement and control technology and instrumentation from China University of Petroleum, China, in 2020. She is currently pursuing her master degree in control science and engineering in the College of Control Science and Engineering, China University of Petroleum, China. Her current research interests include cybersecurity and artificial intelligence security.
  \end{IEEEbiography}

  \begin{IEEEbiography}[{\includegraphics[width=1in,height=1.25in,clip,keepaspectratio]{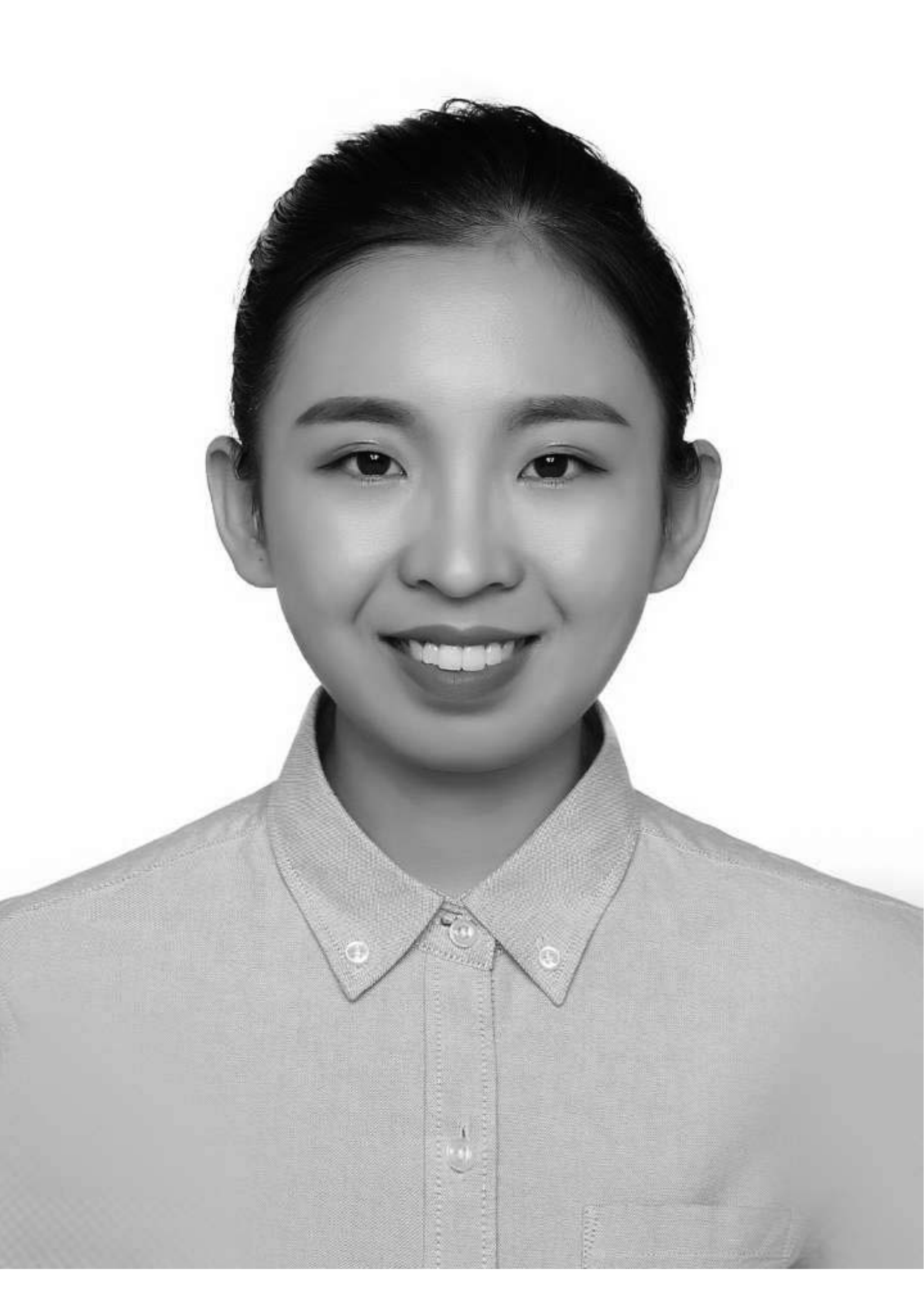}}]{Na Yan}
  received the B. E. degree in Automation from China University of Petroleum, China, in 2019. She is currently pursuing her master degree in control science and engineering in the College of Control Science and Engineering, China University of Petroleum, China. Her research interest is in the field of RFID.
  \end{IEEEbiography}

  \begin{IEEEbiography}[{\includegraphics[width=1in,height=1.25in,clip,keepaspectratio]{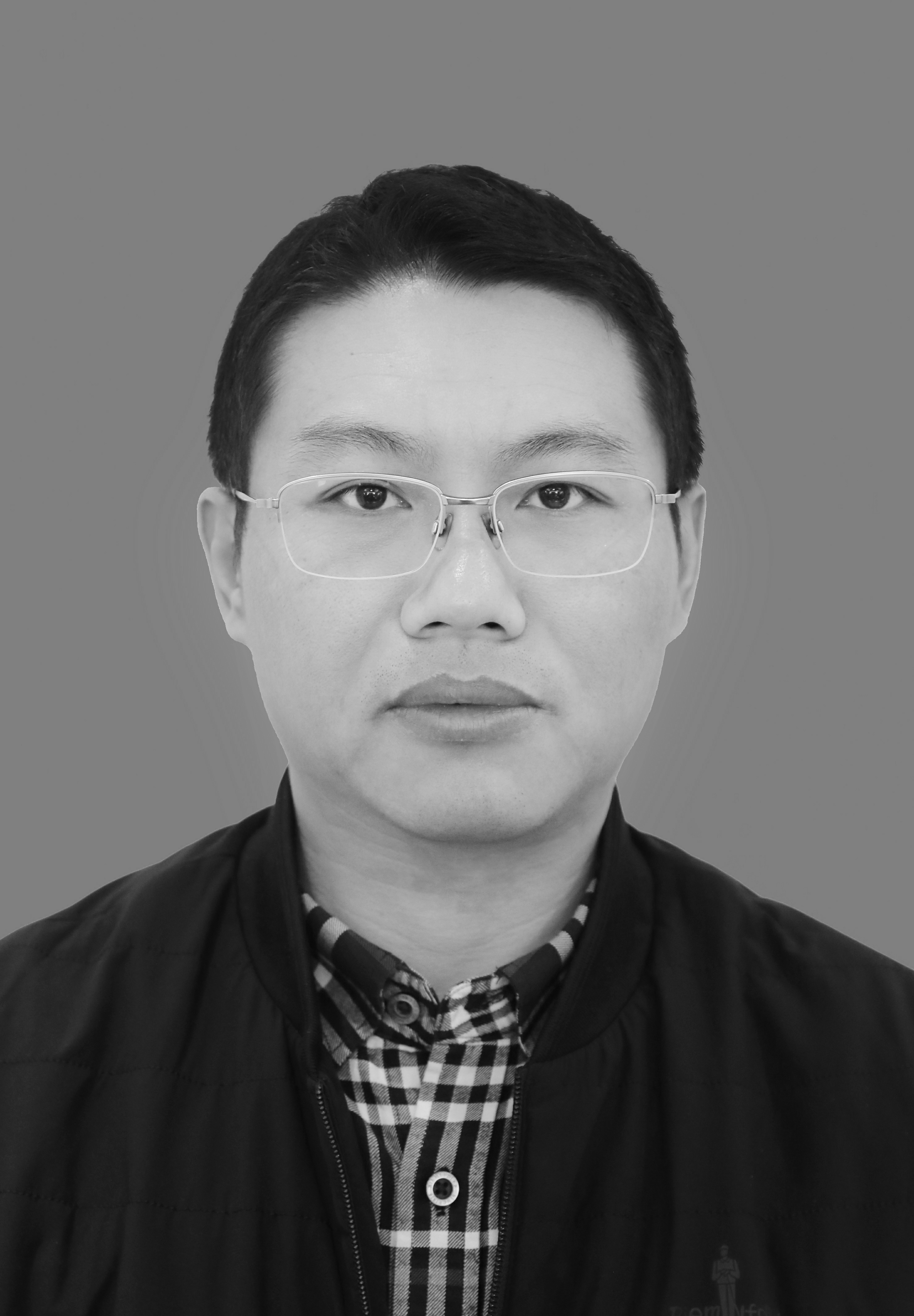}}]{Weifeng Liu}
   is currently a Professor with the College of Control Science and Engineering, China University of Petroleum (East China), China. He received the double B.S. degree in automation and business administration and the Ph.D. degree in pattern recognition and intelligent systems from the University of Science and Technology of China, Hefei, China, in 2002 and 2007, respectively. His current research interests include pattern recognition and machine learning. He has authored or co-authored a dozen papers in top journals and prestigious conferences including 10 ESI Highly Cited Papers and 3 ESI Hot Papers. Dr. Weifeng Liu serves as associate editor for Neural Processing Letter, co-chair for IEEE SMC technical committee on cognitive computing, and guest editor of special issue for Signal Processing, IET Computer Vision, Neurocomputing, and Remote Sensing. He also serves dozens of journals and conferences.
  \end{IEEEbiography}

  \begin{IEEEbiography}[{\includegraphics[width=1in,height=1.25in,clip,keepaspectratio]{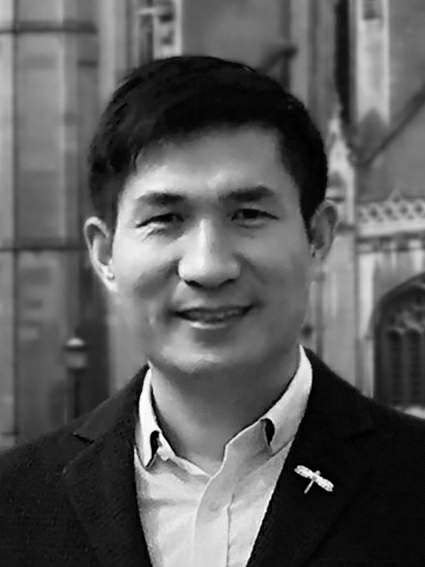}}]{Feng Xia}
  received the B.Sc. and Ph.D. degrees from Zhejiang University, Hangzhou, China. He is currently an Associate Professor and the Discipline Leader of the School of Science, Engineering and Information Technology, Federation University Australia, Ballarat, VIC, Australia. He has published two books and over 300 scientific papers in international journals and conferences. His research interests include data science, knowledge management, social computing, and systems engineering. He is a Senior Member of IEEE and ACM.
  \end{IEEEbiography}

  \end{document}